\definecolor{mycolor}{gray}{0.85}
\newtheorem{observation}[lemma]{Observation}
\providecommand{\keywords}[1]{\textbf{\text{Keywords:}} #1}
\definecolor{emphcol}{gray}{.75}
\newcommand{\take}[1]{\colorbox{emphcol}{$#1$}}
\renewcommand{\epsilon}{\varepsilon}
\newcommand{\PoA}{\textbf{PoA}\xspace}
\newcommand{\SPoA}{\textbf{SPoA}\xspace}
\newcommand{\PoS}{\textbf{PoS}\xspace}
\newcommand{\SPoS}{\textbf{SPoS}\xspace}
\newcommand{\aSPoS}{\textbf{adaptive SPoS}\xspace}
\newcommand{\instances}{\mathcal{P}_{n,m}}
\newcommand{\trees}{\mathcal{T}_{n,m}}
\newcommand{\perms}{\mathcal{S}_n}
\begin{document}

\title{Sequential Solutions in Machine Scheduling Games}

%



\date{}

\author{Cong Chen\inst{1} \and Paul Giessler\inst{2} \and Akaki Mamageishvili\inst{3} \and Mat\'u\v{s} Mihal\'ak\inst{2} \and Paolo Penna\inst{4}}
\institute{
School of Business Administration, South China University of Technology, Guangzhou, China \and
Department of Data Science and Knowledge Engineering, Maastricht University, the Netherlands \and
Department of Management, Technology and Economics, ETH Zurich, Switzerland \and
Department of Computer Science, ETH Zurich, Switzerland }
\maketitle

\begin{abstract}
	We consider the classical machine scheduling, where $n$ jobs need to be scheduled on $m$ machines, and where job $j$ scheduled on machine $i$ contributes $p_{i,j}\in \mathbb{R}$ to the load of machine $i$, with the goal of minimizing the makespan, i.e., the maximum load of any machine in the schedule.
	We study inefficiency of schedules that are obtained when jobs arrive sequentially one by one, and the jobs choose themselves the machine on which they will be scheduled, aiming at being scheduled on a machine with small load. 
	%
	%
	We measure the inefficiency of a schedule as the ratio of the makespan obtained in the worst-case equilibrium schedule, and of the optimum makespan. This ratio is known as the \emph{sequential price of anarchy (\SPoA)}. 
	We also introduce two alternative inefficiency measures, which allow for a favorable choice of the order in which the jobs make their decisions.
	%
	As our first result, we disprove the conjecture of~\cite{Hassin} claiming that the sequential price of anarchy for $m=2$ machines is at most 3. 
	%
	We show that the sequential price of anarchy grows at least linearly with the number $n$ of players, assuming arbitrary tie-breaking rules. That is, we show $\SPoA \in \Omega(n)$. Complementing this result, we show that $\SPoA \in O(n)$, reducing previously known exponential bound for $2$ machines.
	Furthermore, we show that there exists an order of the jobs, resulting in makespan that is at most linearly larger than the optimum makespan.
	To the end, we show that if an authority can change the order of the jobs adaptively to the decisions made by the jobs so far (but cannot influence the decisions of the jobs), then there exists an adaptive ordering in which the jobs end up in an optimum schedule.
	%
%
\end{abstract}

\keywords{Machine Scheduling; Price of Anarchy; Price of Stability;}

%
\section{Introduction}\label{sec:intro}
We consider the classical optimization problem of scheduling $n$ jobs on $m$ \emph{unrelated} machines. 
In this problem, each job has a (possibly different) processing time on each of the $m$ machines, and a schedule is simply an assignment of jobs to machines. For any such schedule, the load of a machine is the sum of all processing times of the jobs assigned to that machine. In this optimization problem, the objective is to find a schedule minimizing the \emph{makespan}, that is, the maximum load among the machines. 

In the \emph{game-theoretic} version of this scheduling problem, also known as the \emph{load balancing game}, jobs correspond to players who \emph{selfishly} choose the machine to which the job is assigned. 
The cost of a player is the \emph{load} of the machine to which the player assigned its own job.
Such a setting models, for example, the situation where the machines correspond to servers, and the communication with a server has a latency that depends on the total traffic to the server.

The decisions of the players lead to some \emph{equilibrium} in which no player has an incentive to deviate, though the resulting schedule may not necessarily be optimal in terms of makespan. Such an equilibrium might have a rather high \emph{social cost}, that is, the makespan of the corresponding schedule\footnote{When each player chooses deterministically one machine, this definition is obvious. When equilibria are \emph{mixed} or randomized, each player chooses one machine according to some probability distribution, and the social cost is the expected makespan of the resulting schedule.} is not guaranteed to be the optimal one, as in \Cref{ex:bad-Nash} below.

\begin{example}[two jobs on two unrelated machines \cite{Awerbuch+etal/2006}]\label{ex:bad-Nash}
	Consider two jobs and two unrelated machines, where the processing times are given by the following table:
	\[
	    \begin{array}{c|c|c|}
		                    & \text{job~1} & \text{job~2} \\ \hline
	    \text{machine~1}	& 1            & \take{\ell}  \\ \hline
	    \text{machine~2}	& \take{\ell}  & 1            \\ \hline
	    \end{array}
	 \]
	 The allocation represented by the gray box is a pure Nash equilibrium in the load balancing game (if a job moves to the other machine, its own cost increases from $\ell$ to $\ell + 1$), and has makespan $\ell$. The optimal makespan  is $1$ (swap the allocations). This example shows that the makespan of an equilibrium can be arbitrarily larger than the optimum.
\end{example}

The inefficiency of equilibria is a central concept in algorithmic game theory. Typically, one aims to quantify the \emph{efficiency loss} resulting from a \emph{selfish behavior} of the players, where the loss is measured in terms of the social cost.
Arguably, the two most popular measures of inefficiency of equilibria are the \emph{price of anarchy (\PoA)} \cite{PoAoriginal} and the \emph{price of stability (\PoS)} \cite{PoSoriginal}, which, intuitively, consider the \emph{most pessimistic} and the \emph{most optimistic} scenario:
\begin{itemize}
	\item The \emph{price of anarchy} is the ratio of the cost of the \emph{worst} equilibrium over the \emph{optimal social cost};
	\item The \emph{price of stability} is the ratio of the cost of the \emph{best} equilibrium over the \emph{optimal social cost}. 
\end{itemize}
The price of anarchy corresponds to the situation (anarchy) in which there is no authority, and players converge to some equilibrium by themselves. In the price of stability, one envisions that there are means to suggest the players how to play, and if that is an equilibrium, then they will indeed follow the suggestion, as no unilateral deviation can improve a player's individual cost. Furthermore, the price of stability provides a lower bound on the efficiency loss of an equilibrium outcome, if, for example, no equilibrium is actually a social optimum. 

Example~\ref{ex:bad-Nash} thus shows that the \textbf{price of anarchy} of load balancing games is \textbf{unbounded even for two jobs and two machines}. Interestingly, the price of stability instead is one (\PoS=1), for any number of jobs and any number of machines. This is because there is always an optimal solution that  is also a pure Nash equilibrium \cite{Even-Dar+Kesselman+Mansour/2003} (see  Section~\ref{sec:further-related-work} for details). In a pure Nash equilibrium, players choose their strategies deterministically, as opposed to \emph{mixed} Nash equilibria. In this work, we will also focus on the case in which players act deterministically, though in a sequential fashion (see below).

As the price of anarchy for unrelated machines is very high (unbounded in general),  one may ask whether Nash equilibria are really what happens as an outcome in the game, or whether a central authority, which cannot influence the choices of the players (jobs), may alter some aspects of the scheduling setting, and as a result, improve the performance of the resulting equilibria.


Motivated by these issues, 
in \cite{originalSPOA} the authors consider the variant in which players, instead of choosing their strategies simultaneously, play sequentially taking their decisions based on the previous choices and also knowing the order of players that will make play. Formally, this corresponds to an \emph{extensive-form game}, and the corresponding equilibrium concept is called a \emph{subgame-perfect equilibrium}. Players always choose their strategy deterministically. The resulting inefficiency measure is called the \emph{sequential price of anarchy (\SPoA)}.

There are two main motivations to study a sequential variant of the load balancing game. First, assuming that all players decide simultaneously to choose the machine to process their jobs is a too strong and unnatural modeling assumption in many situations; furthermore, expecting that all players choose the worst-case machine, as was the case in Example~\ref{ex:bad-Nash}, is unnatural as well. Second, one may have the power to explicitly ask the players to make sequential decisions, and make this the policy, which the players are aware of, with the view of lowering the loss of efficiency of the resulting equilibrium schedules. In a sense, such an approach of adjusting the way the players access the machines resembles \emph{coordination mechanisms} \cite{Christodoulou+Koutsoupias+Nanavati/2009}, which are scheduling policies aiming to achieve a small price of anarchy (see Section~\ref{sec:further-related-work} for more details).

\subsection{Prior results (\SPoA for unrelated machines)}\label{sec:prior:SPoA}
The first bounds on the \SPoA for unrelated machines have been obtained in \cite{originalSPOA}, showing that 
\[
n \leq \SPoA \leq m\cdot 2^n\text{.}
\]
Therefore, \SPoA is \emph{constant} for a constant number of machines and jobs, while \PoA is \emph{unbounded} even for two jobs and two machines (recall Example~\ref{ex:bad-Nash}). The large gap in the previous bound naturally suggests the question of  what happens for \emph{many jobs} and \emph{many machines}. This was addressed by \cite{bilo} which improved significantly the prior bounds by showing that 
\[
    2^{\Omega(\sqrt{n})} \leq  \SPoA \leq 2^n\text{.}
\]
At this point, one should note that these lower bounds use a \emph{non-constant} number of machines. In other words, it still might be possible that for a \emph{constant number of machines} the \SPoA is constant. For \emph{two machines}, \cite{Hassin} proved a lower bound $\SPoA \geq 3$, and in the same work the authors made the following conjecture: 
\begin{conjecture}\label{conjecture}\cite{Hassin}
	 \em For two unrelated machines, $\SPoA=3$ for any number of jobs. 
\end{conjecture}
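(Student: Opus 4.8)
The statement asserts the exact value $\SPoA = 3$ for two unrelated machines, so it splits into a lower and an upper bound. The lower bound $\SPoA \ge 3$ is already available from the instance of \cite{Hassin}, so the plan is to concentrate on the upper bound: for \emph{every} arrival order and \emph{every} subgame-perfect equilibrium on two machines, the resulting makespan is at most $3$ times the optimum. I would first normalize the optimum makespan to $1$ and fix an optimal schedule $\sigma^*$ whose two machine loads are both at most $1$; in particular every job $j$ has a \emph{cheap} machine $\sigma^*(j)$ on which its processing time is at most $1$, so that $\min(p_{j1},p_{j2})\le 1$, where $p_{ji}$ denotes the processing time of job $j$ on machine $i$.

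The overall strategy is a contradiction argument driven by backward induction. Assuming the equilibrium makespan exceeds $3$, say machine $1$ finishes with load strictly above $3$, I would single out a \emph{pivotal} job on machine $1$ and argue that, in the subgame rooted at its decision node, moving it to machine $2$ would have yielded strictly smaller cost, contradicting the subgame-perfect property. The natural candidate is the last job $j$ (in arrival order) placed on machine $1$: at the moment it decides, machine $1$ already carries load above $3-p_{j1}$, and every job arriving after $j$ goes to machine $2$. If $j$'s cheap machine is machine $1$, then $p_{j1}\le 1$ forces this residual load above $2$, which looks far too large to be an equilibrium choice.

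The crux, and the step I expect to be the main obstacle, is that in a subgame-perfect equilibrium a deviation is \emph{not} evaluated against the frozen current loads but against the \emph{re-equilibrated} subtree: once $j$ switches to machine $2$, all subsequent players recompute their best responses, so some jobs that would have gone to machine $2$ may now move to machine $1$, and later jobs may load machine $2$ as a ``punishment'' of the deviator. To push the contradiction through I would need a \emph{no-harmful-punishment} lemma bounding, over the whole subtree, how much future players can pile onto the machine the deviator moved to; the intended leverage is that every such future job also owns a cheap machine contributing at most $1$ in $\sigma^*$, so it cannot be forced to inflate machine $2$ without limit. Quantifying this cascade is exactly where a factor of $3$ would have to emerge, presumably as the deviator's own processing time (at most $1$) plus an at-most-doubling from redistributing the remaining optimal load across the two machines.

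I would then assemble the upper bound as an induction on the number of jobs, carrying a strengthened invariant that bounds \emph{both} machine loads in terms of the optimal contributions of the jobs still to arrive, and splitting into the symmetric cases according to which machine is the deviator's cheap side. The delicate point throughout is controlling the compounding effect of a single deviation through successive layers of backward induction; if that cascade cannot be contained below the threshold $3$, the strategy—and with it the conjectured constant bound—would fail precisely there.
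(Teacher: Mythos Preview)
Your proposal attempts to \emph{prove} the conjecture, but the paper \emph{disproves} it. Theorem~\ref{th:LB:five-jobs} exhibits a five-job instance on two machines whose subgame-perfect equilibrium has makespan $4-13\epsilon$ while the optimum is $1$, so $\SPoA\ge 4$ already for $n=5$; Theorem~\ref{counterexample} extends this to $\SPoA=\Omega(n)$. Hence no upper bound of the form $\SPoA\le 3$ exists, and your whole contradiction argument is aimed at a false target.

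The place where your outline would necessarily break is precisely the step you yourself flag as the ``main obstacle'': the \emph{no-harmful-punishment} lemma. In the paper's construction, when an early job deviates to its cheap machine, the subsequent players' re-equilibrated responses \emph{do} pile load onto that machine in a cascading way, and the effect is not bounded by any constant multiple of $OPT$. Concretely, in the instance of \eqref{eq:SPE:five-players}, if $J_1$ plays its zero-cost machine $M_2$, then $J_2$, $J_3$, $J_4$, $J_5$ each rationally shift so that $J_1$'s final cost on $M_2$ is $3-10\epsilon$, not $\le 1$; the ``at-most-doubling from redistributing the remaining optimal load'' you hope for simply does not hold, because future players may themselves abandon their cheap machines in the re-equilibrated subtree. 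Your proposed induction invariant therefore cannot be maintained, and the factor that emerges from the cascade is $\Theta(n)$, not $3$.
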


\subsection{Our contributions}\label{sec:contributions}
In this paper, we disprove Conjecture~\ref{conjecture} by showing that in fact, \SPoA on two machines is \emph{not even constant}.
Indeed, it must grow linearly and the conjecture fails already for few jobs: 
\begin{itemize}
	\item For \emph{five jobs} we have $\SPoA \geq 4$ (Theorem~\ref{th:LB:five-jobs});
	\item In general, with arbitrary tie-breaking rules, it holds that $\SPoA \geq \Omega(n)$ (Theorem~\ref{counterexample}). 
\end{itemize}
Note that the result of Theorem~\ref{counterexample} uses suitable player-specific tie-breaking rules (see Definition~\ref{def:arbitrary-ties}). We discuss the implications of using tie-breaking rules more in detail at the end of this subsection. 

While Theorem~\ref{th:LB:five-jobs} settles the conjecture, the result of Theorem~\ref{counterexample}  says that \SPoA is non-constant already for two machines (as the number of jobs grows) for generic tie-breaking rules. We actually conjecture that there exist instances for which the $\SPoA$ is unbounded without having ties.
Moreover, it implies a \emph{strong separation} with the case of \emph{identical} machines, where $\SPoA \leq 2-\frac{1}{m}$, for any number $m$ of machines \cite{Hassin}. 
In Theorem~\ref{thm:UB4SPoA} we show that \SPoA is upper bounded by $2(n-1)$, reducing the exponential upper bound obtained in~\cite{bilo} for arbitrarily many machines to linear bound for $2$ machines.
 
The original idea behind the notion of price of stability (\PoS) is that an authority can suggest to the players how to play:
\begin{quote}
    \emph{[...] The best Nash equilibrium solution has a natural meaning of stability in this context -- it is the optimal solution that can be proposed from which no user will defect. [...] As a result, the global performance of the	system may not be as good as in a case where a central authority can simply dictate a solution; rather, we need to understand the quality of solutions that are consistent with self-interested behavior.} \cite{PoSoriginal}
 \end{quote}
 We borrow this idea of  an authority suggesting desirable equilibria.
 Specifically for our setting, the authority suggests the order in which players make their decisions, so to induce a good equilibrium. This can be viewed as the price of stability (\PoS) for these sequential games. We introduce this notion in two variants (a weaker and a stronger):
\begin{itemize}
	\item \emph{Sequential Price of Stability (\SPoS).} The authority can choose the order of the players' moves. This order determines the tree structure of the corresponding game. 
	\item \emph{Adaptive Sequential Price of Stability (\aSPoS).} The authority decides the order of the players' moves \emph{adaptively} according to the choices made at each step. 
\end{itemize}
The study of these two notions for two unrelated machines is also motivated by our lower bound, and by the lack of any good upper bound on this problem. We prove the following upper bounds for two unrelated machines (\Cref{th:SPoS,optimum_sequence}):
\begin{align*}
	\SPoS \leq \frac{n}{2}+1 \ , && \aSPoS = 1 \ .
\end{align*}
The next natural question is to consider \emph{three} or more machines. Here we show an impossibility result, namely $\aSPoS \geq 3/2$ already for three machines (\Cref{th:adaptive:LB}). That is, even with the strongest type of adaptive authority, it is not possible to achieve the optimum. This shows a possible  disadvantage of having players capable of complex reasoning, like in extensive-form games. In the classical strategic-games setting, where we consider pure Nash equilibrium, here is an optimum which \emph{is} an equilibrium, that is, $\PoS=1$ for any number of machines and jobs. This result follows from \cite{Even-Dar+Kesselman+Mansour/2003} (see Section~\ref{sec:further-related-work} for details)

As mentioned above, some of our results rely on the use of a suitable tie-breaking rules. Using tie-breaking rules  to prove lower bounds on the \SPoA is not new:  in \cite{SPOAcongestion} the authors showed that, in \emph{routing games}, the sequential price of anarchy is \emph{unbounded}. Their proof is based on  carefully chosen tie-breaking rules.  This way of using tie-breaking rules is not part of the players' strategy interactions. In contrast, some works consider settings where among equivalent choices, each player $i$ can use the one that hurts prior agents who
chose a strategy that player $i$ would prefer they had not chosen (see \cite{tie-breaking}).

\subsection{Further related work}\label{sec:further-related-work}
The load balancing games considered in this work are one of the most studied models in algorithmic game theory (see, e.g., \cite{PoAoriginal,koutsoupias2003approximate,fiat2007strong,andelman2009strong,Czumaj+Vocking/2007,feldmann2003nashification,Epstein2010}).  In all these works, players correspond to jobs, their cost is the load of the machine they choose, and the social cost is defined as the makespan of the jobs allocation.  In particular, the seminal paper \cite{PoAoriginal} which introduced the concept of the price of anarchy,   considers the case of \emph{identical} and \emph{related} machines,   two simpler versions of unrelated machines (related machines is the setting where each machine has a speed, each job has a certain size, and the processing time equals the job size divided by the machine speed; the case of identical machines is the restriction in which all speeds are the same). 

Interestingly,  the price of anarchy for \emph{related} or \emph{identical} machines is much better than in the case of unrelated machines (where the price of anarchy is unbounded). Indeed, for related and identical machines, the price of anarchy is \emph{bounded} for any \emph{constant number of machines}  \cite{PoAoriginal,Czumaj+Vocking/2007,koutsoupias2003approximate,Finn+Horowitz/1979,fiat2007strong,feldmann2003nashification,Epstein2010} (some of these results give bounds also for \emph{mixed} Nash equilibria). Specifically,  for pure Nash equilibria,   $\PoA=(2-\frac{2}{m+1})$ for identical machines as implied by the analysis of  \cite{Finn+Horowitz/1979}, while $\PoA = O(\frac{\log m}{\log \log m})$  for related machines \cite{Czumaj+Vocking/2007}. 

%


As already mentioned above, the \PoS
for \emph{unrelated} machines is $1$. This is due to the work  \cite{Even-Dar+Kesselman+Mansour/2003} which shows that, starting from any schedule, an iterative process of applying unilateral improving-strategy changes of players leads to a pure Nash equilibrium (the same property has been observed earlier in \cite{Fotakis+etal} for related machines). This condition implies the existence of a pure Nash equilibrium.

Load balancing games on identical and related machines are a special case of \emph{weighted singleton congestion games}. In a singleton weighted congestion game, there are $m$ resources, and $n$ players, each player $i$ having a weight $w_i$.
Every resource $r$ has a cost function $c_r$ associated with it. In the game, every player $i$ chooses one resource $s_i$ as its strategy, resulting in cost $c_r(\sum_{i:s_i=r} w_i)$ of the resource, which is also the cost of every player $i$ that chooses resource $r$ as its strategy. Obviously, seeing the machines in the load balancing games as resources, seeing the jobs as the players, seeing the job sizes as weights $w_i$, and setting $c_r(x)=x/\text{speed}_r$, the singleton congestion game models the load balancing games on $m$ related machines, where machine $r$ has speed $\text{speed}_r$, and the processing time of job $i$ on machine $r$ is $w_i/\text{speed}_r$.
Load balancing games on unrelated machines have, to the best of our knowledge, no counterpart in congestion games.

Requiring that players make their decisions sequentially, according to a given and known order can be seen as a mean of a central authority that can control  access to the resources (machines), but not the choices of the players (jobs).
In this sense, changing the access from simultaneous to sequential can be seen as a kind of control mechanism like a \emph{coordination mechanism} \cite{Christodoulou+Koutsoupias+Nanavati/2009}. In load balancing games where the cost of a player (job) is the completion time of the job (and not the total load of the machine on which the job is scheduled), a coordination mechanism is a scheduling policy, one for every machine, which determines the order of the jobs in which they will be scheduled on the machine. The scheduling policy needs to be fixed and (publicly) known to the players.
For load balancing games in normal form (i.e., where players make simultaneous decisions, as opposed to the sequential decisions, which we consider in this paper), coordination mechanisms have been studied both for the version where the social cost is the makespan (see, e.g., \cite{Immorlica+Li+Mirrokni+Schulz/2009,Caragiannis/2013,Azar+Fleischer+Jain+Mirrokni+Svitkina/2015,Caragiannis+Fanelli/2016} and the references therein), or the total (weighted) completion time (see, e.g., \cite{Hoeksma+Uetz/2012,Correa+Queyranne/2012,Cole+etal/2015,Abed+Correa+Huang/2014,Zhang+Zhang_Du+Bai/2018} and the references therein).

As already discussed above, the concept of a sequential price of anarchy is not new. In addition to the results for unrelated machines discussed in Section~\ref{sec:prior:SPoA},  the sequential price of anarchy has been studied also for other games. These include congestion games with affine delay functions  \cite{SPOAcongestion}, isolation games \cite{Angelucci2015}, and network congestion games  \cite{correa2015curse}. Interestingly, the latter work shows that the sequential price of anarchy for these games is \emph{unbounded}, as opposed to the price of anarchy which was known to be $5/2$.

Naturally, there is a huge  literature on the classical  algorithm-theoretic research on machine scheduling, see, e.g., the textbook \cite{Pinedo/2012} and the survey \cite{Chen+Potts+Woeginger/1999} for fundamental results and further references.

\section{Preliminaries}\label{sec:preli}
In unrelated machine scheduling there are $n$ jobs and $m$ machines, and the processing time of job $j$ on machine $i$ is denoted by $p_{ij}$. A solution (or schedule) consists of an assignment of each job to one of the machines, that is, a vector $s=(s_1,\ldots,s_n)$ where $s_j$ is the machine to which job $j$ is assigned to. 
The \emph{load} $l_i(s)$ of a machine $i$ in schedule $s$ is the sum of the processing times of all jobs allocated to it, that is, $l_i(s) = \sum_{j: s_j=i}p_{ij}$. The social cost of a solution $s$ is the \emph{makespan}, that is, the maximum load among all machines.  

Each job $j$ is a \emph{player} who attempts to minimize her own cost $cost_j(s)$, that is, the load of the machine she chooses: $cost_j(s) = l_{s_j}$. Every player $j$ decides $s_j$, the assignment of job $j$ to a machine. The combination of all players strategies gives a schedule $s=(s_1,\ldots,s_n)$.

In the extensive-form version of these games, players play sequentially; they decide their strategies based on the choices of the previous players and knowing that the remaining players will play rationally. 
We consider a \emph{full information} game. As players enter the game sequentially, they can compute their optimal moves by the so-called \emph{backward induction}: the last player makes her move greedily, the player before the last makes the move also greedily (taking into account what the last player will do), and so on. Any game of this type can be modeled by a \emph{decision tree}, which is a rooted tree where the non-leaf vertices correspond to the players in certain states, while edges correspond to the strategies available to the players in a given state.

Each leaf corresponds to a solution (schedule), which is simply the strategies on the unique leaf-to-root path. Given the processing times $P=(p_{ij})$, the players can compute the loads on the machines in each of the leaves. In  case of ties, all players know the deterministic tie-breaking rules of all the other players. A player can calculate what the final outcome would be for each of her strategies, and choose the strategy that minimizes her cost. This method is called backward induction. Strategies obtained in this way for each internal node constitute what is called the \emph{subgame-perfect equilibrium}: for each subtree, we know what is the outcome achieved by the players in this subtree if they play rationally.  We usually represent the strategies (edges) that are chosen by players in the \textbf{subgame perfect equilibrium} in  \textbf{bold}, and the other strategies as \emph{dashed} edges. 

It is easy to see that a subgame-perfect equilibrium always exists and it is unique, for given tie-breaking rules. On the other hand, its computation is difficult, as proved in~\cite{originalSPOA}:

\begin{theorem}~\cite{originalSPOA}
Computing the outcome of a subgame perfect equilibrium in Unrelated Machine Scheduling is PSPACE-complete.
\end{theorem}

\paragraph{Notation and formal definitions.}
We consider $n$ jobs and $m$ machines, denoted by $J=(J_1,J_2,\ldots,J_n)$ and $M=(M_1,M_2,\ldots,M_m)$ respectively. The processing times are given by a matrix $P=(p_{ij})$, with $p_{ij}$ being the processing time of job $J_j$ on machine $M_i$. The set of all such nonnegative $n \times m$ matrices is denoted by $\instances$ and it represents the possible instances of the game. For any $P\in \instances$ as above, we denote by $\trees$ the set of all possible depth-$n$, complete $m$-ary  decision trees where each path from the root to a leaf contains every job (player) exactly once. The whole game (and the resulting subgame perfect equilibrium) is fully specified by $P$, $T$, and the \emph{tie-breaking rule} used by the players. The most general -- worst case -- scenario is that ties are arbitrary (see Definition~\ref{def:arbitrary-ties}). In the following, we do not specify the dependency on the ties, and simply denote by  $SPE(P,T)$ the cost (makespan) of the subgame perfect equilibrium of the game. 
One type of worst-case analysis is to assume the players' order to be adversarial, and the tree $T$ being chosen accordingly. This is the same as saying that players arrive in a fixed order (say $J_1,J_2,\ldots, J_n$) and their costs $P$ is chosen in an adversarial fashion. 
In this case, we simply write $SPE(P)$ as the tree structure is fixed.
For a fixed order $\sigma$  (a permutation) of the players, and costs $P$, we also write $SPE(P,\sigma)$ to denote the quantity $SPE(P,T)$ where $T$ is the tree resulting from this order $\sigma$ of the players.
The optimal social cost (makespan) is denoted by $OPT(P)$.

We next introduce formal definitions to quantify the inefficiency of subgame perfect equilibria in various scenarios (from the most pessimistic to the most optimistic).  
The \emph{sequential price of anarchy (\SPoA)} compares the worst subgame perfect equilibrium with the optimal social cost,  $$\text{\SPoA} = \sup_{P \in \instances} \frac{SPE(P)}{OPT(P)}.$$
	In the \emph{sequential price of stability (\SPoS)}, we can choose the order $\sigma$ in which players play depending on the instance $P$. The resulting subgame perfect equilibrium has cost $SPE(P,\sigma)$, which is then compared to the optimum, $$\SPoS = \adjustlimits \sup_{P\in \instances}\min_{\sigma\in \perms}\frac{SPE(P,\sigma)}{OPT(P)},$$ where  $\sigma$ ranges over all permutations $\perms$ of the $n$ players. 
	In \emph{adaptive sequential price of stability (\aSPoS)}, we can choose the whole structure of the tree, meaning that for each choice of a player, we can adaptively choose which player will play next. This means that every path from any leaf to the root corresponds to a permutation of the players. The adaptive price of stability is then defined as $$\aSPoS = \adjustlimits \sup_{P\in \instances}\min_{T\in \trees}\frac{SPE(P,T)}{OPT(P)}.$$ 
Note that by definition $\aSPoS \leq \SPoS \leq \SPoA$. 

\section{Linear lower bound for \SPoA}\label{sec:counterexample}

In this section, we consider the sequential price of anarchy for \emph{two} unrelated machines. In \cite{Hassin} the authors proved a lower bound $\SPoA\geq 3$ for this case, and they conjectured that this was also a tight bound. We show that unfortunately this is not the case: Already for five jobs, $\SPoA \geq 4$, and with more jobs the lower bound grows linearly, i.e., $\SPoA = \Omega(n)$.

\subsection{A lower bound for $n=5$ players}

\begin{theorem}\label{th:LB:five-jobs}
	For two machines and at least five jobs, the \SPoA is at least $4$.
\end{theorem}
The proof is in the appendix.

\subsection{Faster linear program formulation}
	Our first lower bound for $n=5$ players has been obtained by solving a linear program, suggested in \cite{Hassin}. 
	A crucial achievement for the speedup of the program is the discovery of
	the property that we describe in the following. This property allows excluding a large
	number of combinations for the last layer of the tree. As the last layer of the
	tree represents more than half of the internal nodes in the tree, the
	number of combinations that have to be generated can be reduced drastically.
	For example, for $n=5$, the improvement is from $~2\cdot 10^9$ to $~6\cdot 10^6$.

In this approach of linear programming, the variables are the processing times $\{p_{ij}, 1\leq i \leq m, 1\leq j \leq n \}$, and the approach essentially goes  as follows: 
\begin{itemize}
	\item Fix the subgame perfect equilibrium structure, that is, the sequence of players and all the decisions in the internal nodes, this also gives the sequential equilibrium;
	\item Fix the leaf which is the optimum, and impose that the optimum makespan is at most $1$; 
	\end{itemize}
	For every fixed  subgame perfect equilibrium tree structure, we have one constraint for each internal node (decision of a player).
	The optimum state (leaf) should also be fixed and both numbers have to be assumed to be at most $1$ by adding two additional constraints to the linear program. By maximizing the maximum value of loads on the machines in the leaf which corresponds to the sequential equilibrium, we get the worst case example for this particular tree structure. There are $2^{n}-1$ internal nodes in the decision tree with $n$ players. Therefore, this  approach requires exploring $2^{2^{n}-1}$ many possible subgame perfect equilibria tree structures, and for each of them, we have to decide where is the optimum among $2^n$ leaves and solve a linear program of size $2^n \times O(n)$.
	
	We managed to solve the case $n=5$ players with the aid of a computer program which explored all possible  tree structure leading to  subgame perfect equilibria; this has been achieved by understanding the structure and by breaking certain symmetries to reduce the search space, as we explain next. 
	It is clear that the extremely fast growing number of possible tree structures
	makes the program very time-consuming even for small values. Consequently, we tried to exclude combinations from the
	computation, i.e. we avoid starting the linear programming solver for
	certain tree structures. There are some trivial cases that we present for the intuition. The first is that not
	all leaves of the tree should be tested for the position of the optimum. Both
	the leftmost and the rightmost leaf nodes can be excluded from the possible optimum position due to
	the property that \SPoA is $1$ in both cases. For the same reason, all tree structures
	where the equilibrium is located at those extreme leaves can also be ignored.
	Additionally, the leaf where the equilibrium is achieved should be avoided for
	the optimum position.
	The next idea is that trees that are mirror images of other trees with regard
	to the vertical axis will lead to the same \SPoA.

	During the experimental investigations of possible outcomes based on the structure
	of the game tree, we found out that a relatively big part of game tree structures
	always leads to an infeasible linear program, regardless of the position of
	the optimum. Consider the  simple tree structure for \emph{two} players and \emph{two} machines, depicted in \Cref{fig:conflicting}. Solid lines represent the best responses in each node.
	It is obvious that, if the best response of player $2$ in the left node is to choose machine $1$, then the best response in the right node cannot be to choose  machine $2$.
	We generalized this observation to any number of players $n$ and a number of machines $m$, and apply it to the second lowest level of the tree, i.e., to the best responses of the last player $J_n$. 
	
	\paragraph{Additional notation.} We denote the nodes of a $T\in \trees$ by $H$ and the nodes in the level $i$ by $H_i$. Clearly, in the case of adaptive sequences, $H_i$ may contain different players, while in the case of fixed sequences, in $H_i$ all the nodes correspond to exactly one player. 
	
	Let $p(h)$ denote the parent node of node $h$ in tree $t$. We define the child index $c(h) = i \in M$ of node $h$ if $h$ is the $i$-th child of it parent $p(h)$, this means that the edge from $p(h)$ to $h$ represents that the agent corresponding to $p(h)$ selects $M_i$. Every node $h$ on the $j$-th layer of $t$ is defined by the choices of the agents playing before agent $J_j$. These decisions create a unique path from the root of $t$ to $h$. The nodes on the path from the root $r$ to $h$ are $P(h) = \{r,...,p(p(h)), p(h), h\} = \{h'\in H|h \in t_{h'}\}$ where $t_{h'}$ is the subtree of $t$ rooted at $h'$. These definitions allow us to define the set of agents that selected a certain machine $i$ before node $h$: $B(i,h) = \{ j\in J | \ p(h')\in H_j  \land h'\in P(h) \land  c(h') =i\}$. Then the observation is the following: 
	
	\begin{observation}
		For every pair of nodes in the second lowest layer $h,h'\in H_n$ if the best response of player $J_n$ in $h$ is $i$, then it implies that the best response in $h'$ is also $i$ if $\forall i' \in M\setminus i: B(i',h)\subseteq B(i',h')$ holds. 
		\end{observation}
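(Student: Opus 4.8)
The plan is to reduce the statement to an elementary monotonicity argument about the costs seen by the last player. Fix the two nodes $h,h'\in H_n$. At either of them all players other than $J_n$ have already committed, so the partial allocation is fully described by the sets $B(1,h),\dots,B(m,h)$, which form a partition of $\{J_1,\dots,J_{n-1}\}$ (and likewise for $h'$). Writing $\ell_{i'}(h)=\sum_{j\in B(i',h)}p_{i'j}$ for the load on machine $M_{i'}$ just before $J_n$ moves, the cost of choosing $M_{i'}$ is $\ell_{i'}(h)+p_{i'n}$, and the hypothesis ``the best response at $h$ is $i$'' means precisely that $i$ minimizes $\ell_{i'}(h)+p_{i'n}$ over all $i'$.

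The first real step is to turn the one-sided containment hypothesis into a two-sided statement about machine $i$. Since the \emph{same} job set $\{J_1,\dots,J_{n-1}\}$ is partitioned at $h$ and at $h'$, and $B(i',h)\subseteq B(i',h')$ holds for every $i'\neq i$, any job that sits on $M_i$ at $h'$ cannot sit on some other machine $M_{i'}$ at $h$: otherwise it would also lie in $B(i',h')$, contradicting that the allocation at $h'$ is a partition. Hence $B(i,h')\subseteq B(i,h)$, i.e. machine $i$ can only \emph{lose} jobs when passing from $h$ to $h'$, while every other machine can only \emph{gain} them. Because processing times are nonnegative, this yields the load monotonicity $\ell_i(h')\leq\ell_i(h)$ and $\ell_{i'}(h')\geq\ell_{i'}(h)$ for every $i'\neq i$.

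With this in hand the conclusion follows by a single chain of inequalities: for every $i'\neq i$,
\[
\ell_i(h')+p_{in}\;\leq\;\ell_i(h)+p_{in}\;\leq\;\ell_{i'}(h)+p_{i'n}\;\leq\;\ell_{i'}(h')+p_{i'n},
\]
where the outer two inequalities are the load monotonicity just established and the middle one is exactly the best-response hypothesis at $h$. Thus the cost of $M_i$ is no larger than the cost of any other machine at $h'$, so $i$ is a cost-minimizing choice, hence a best response, for $J_n$ at $h'$ as well.

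I expect the only subtle point to be tie-breaking. The chain above certifies that $M_i$ \emph{attains} the minimum cost at $h'$, but if several machines tie, concluding that $i$ is the \emph{selected} best response requires that $J_n$ uses a tie-breaking rule under which a machine chosen at $h$ remains chosen at $h'$ (for instance a fixed priority over machines); under a genuinely arbitrary per-node rule the observation should be read as ``$i$ remains a valid best response''. Everything else, namely the complementary containment $B(i,h')\subseteq B(i,h)$ and the resulting load monotonicity, is routine, so no further machinery is needed.
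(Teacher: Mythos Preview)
Your argument is correct. The key step---using that the $B(i',\cdot)$'s partition $\{J_1,\dots,J_{n-1}\}$ to flip the containment on machine $i$ and obtain the two-sided load monotonicity---is exactly the right observation, and the chain of inequalities then settles the claim. The paper itself states this Observation without proof (it is presented as a structural fact used to prune infeasible LP instances), so there is no alternative argument to compare against; your write-up supplies the missing justification. Your caveat about tie-breaking is also apt: in the paper's LP formulation the ``best response'' at each internal node is \emph{fixed} as part of the tree structure being enumerated, so the Observation is really being used to detect inconsistency (two fixed responses that cannot simultaneously be optimal), which sidesteps the issue of which among several tied optima gets selected.
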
 
		
		\begin{figure}
			\centering
			\includegraphics[scale=.7]{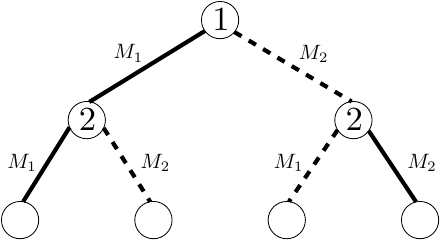}
			\caption{A tree for  the case of two players and two machines. The  responses of the players (bold edges) cannot be a subgame perfect equilibrium.}
			\label{fig:conflicting}
			\end{figure}

	\begin{remark}
	We could not find any example that would give a lower bound on sequential price of anarchy  (even locally) better than the one in the proof of \Cref{counterexample}. For $n\leq 7$, our computer program searched the whole space and the results obtained above are 			the best. We believe that the construction from the proof of the theorem gives the best possible lower bound example. 
	\end{remark}

\subsection{Linear lower bound}
Extending  the construction for $n=5$ players is non-trivial as this seems to require rather involved constants that multiply the $\epsilon$ terms. However, we notice that these terms only help to induce more involved tie-breaking rules of the following form:

\begin{definition}[arbitrary tie-breaking rule]\label{def:arbitrary-ties}
	We say that the tie-breaking rule is arbitrary if each player uses a tie-breaking rule between machines which possibly depends  on the allocation of all players.
\end{definition}


The following theorem gives our general lower bound:

\begin{theorem}
\label{counterexample}
Even for two machines, the \SPoA is at least linear in the number $n$ of jobs, in the case of arbitrary tie-breaking rule.
\end{theorem}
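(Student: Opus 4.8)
The plan is to exhibit, for every $n$, an instance $P\in\instances$ (with $m=2$) together with an arbitrary tie-breaking rule (Definition~\ref{def:arbitrary-ties}) for which $OPT(P)=O(1)$ while $SPE(P)=\Omega(n)$; the ratio then gives $\SPoA=\Omega(n)$. The instance generalizes the cascade of deviations behind Theorem~\ref{th:LB:five-jobs}: the processing times are chosen so that backward induction forces a chain of $\Theta(n)$ successive deviations, each dumping a constant amount of load onto a single machine, so that the equilibrium makespan grows like $n-1$ while the optimum stays at $1$. The key simplification over the five-job case is that, having now granted ourselves an \emph{arbitrary} tie-breaking rule, we no longer need the delicate coefficients multiplying the $\epsilon$ terms -- their only purpose there was to rule out ties -- and may instead use clean processing times and resolve every tie adversarially in exactly the direction that propagates the cascade.

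First I would fix the construction: a ``trigger'' job that is made indifferent between the two machines (so that tie-breaking commits it to the heavy machine, exactly as $J_1$ was committed to $M_1$ for $n=5$), followed by a sequence of ``follower'' jobs, each cheap on one machine and expensive on the other, whose threat profiles are arranged \emph{symmetrically} so that whichever way an earlier job moves, the followers pile onto the same machine. This symmetry is essential: a naive construction in which a single heavy trigger faces uniformly cheap followers fails, because the trigger simply escapes to the cheap machine and the followers then have no incentive to punish it. I would then write down the claimed equilibrium allocation explicitly and verify $OPT(P)=O(1)$ by exhibiting the balanced assignment that places every job on its cheap machine; this direction is a direct computation and presents no difficulty.

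The core of the argument is to certify that the claimed allocation is genuinely the subgame-perfect equilibrium. Rather than unfolding the tree level by level (already three nested case distinctions for $n=5$), I would isolate an \emph{invariant}: a parametrized family of subgame states -- essentially the pair of machine loads after the first $k$ jobs have moved -- that is closed under one step of backward induction. Concretely I would prove a lemma of the form ``if the loads entering $J_k$ lie in a prescribed region $R_k$, then $J_k$ deviates to the prescribed machine, ties break as specified, and the loads entering $J_{k+1}$ lie in $R_{k+1}$.'' Chaining this from $J_1$ to $J_n$ both traces the equilibrium path (yielding the $\Omega(n)$ makespan) and, applied to the off-path subtrees, shows that no single deviation is profitable -- which is precisely what subgame-perfection requires.

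The main obstacle is finding the right invariant. Because each player's move is dictated by anticipating the \emph{entire} remaining play, the region $R_k$ for which $J_k$ makes the cascading choice must be exactly the pre-image, under the backward-induction map, of $R_{k+1}$; making these regions nest correctly, and doing so consistently with a single fixed (adversarial) tie-breaking rule across every subtree, is the technical crux. Once the invariant is pinned down the per-step check reduces to comparing two affine expressions in the current loads, so the remaining work is routine -- the entire creative content lies in choosing the processing times and the regions $R_k$ so that this closure holds uniformly in $n$.
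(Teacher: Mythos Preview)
Your proposal is a strategy outline, not a proof: you never write down the processing-time matrix, the tie-breaking rule, or the invariant regions $R_k$. You yourself flag that ``the entire creative content lies in choosing the processing times and the regions $R_k$'' --- but that content is absent. For this theorem the construction \emph{is} the proof; without it there is nothing to verify.

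There is also a structural concern with the invariant you sketch. You propose tracking only the pair of machine loads entering $J_k$. But the subgame after $J_k$ depends on \emph{which} jobs remain (their processing times differ), and under arbitrary tie-breaking each player's rule may depend on the full history of moves, not just the aggregate loads. A load-pair invariant therefore need not determine the continuation play, so ``$R_k$ maps into $R_{k+1}$'' may not be well-posed. The paper sidesteps this by building a \emph{recursive} instance: with $n=3k-1$ jobs, the suffix starting at $J_4$ is literally the instance for $k'=k-1$, and the first three jobs are arranged so that on the equilibrium path they contribute zero load. This self-similarity lets the proof proceed by induction on $k$ (base case the five-job example), with two short claims handling the branch where $J_1$ takes $M_2$, and a separate backward-induction argument with bespoke tie-breaking for the branch where $J_1$ takes $M_1$. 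The inductive structure is what makes the off-path analysis tractable; your per-step invariant would have to redo this work for every subtree.
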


\begin{proof}
We consider the following instance with $n=3k-1$ jobs arriving in this order (from left to right),
%
\[\begin{array}{c||c|c|c|c|c|c|c|c|c|c|c|c|}
& J_1 & J_2 & J_3 & J_4 & J_5 & J_6 & \cdots & J_{3k-5} & J_{3k-4} & J_{3k-3} & J_{3k-2} & J_{3k-1} \\ \hline\hline
M_1 & \take{k+1} & 0 & \take{0} & k & \take{0} & \take{0} & \cdots & 3 & \take{0} & \take{0} & 1 & 2 \\ \hline
M_2 & 0 & \take{k} & k & \take{0} & k-1 & k-1 & \cdots & \take{0} & 2 & 2 & \take{1} &  \take{1} \\ \hline
\end{array}
\]
and show that the subgame perfect equilibrium is the gray allocation whose makespan is $k+2$, while the optimal makespan is $1$. This requires players to use the following tie-breaking rules in the first part: if player $J_1$ chooses machine $M_1$, then $J_2$ and $J_3$ prefer to avoid player $J_1$, that is, they choose the other machine in case of ties in their final cost.

We prove the claim above by induction on $k$. The base case is $k=2$ which follows directly from the example in \Cref{eq:SPE:five-players}, where we replace  $\epsilon>0$ with an equivalent tie-breaking rule (and set $\epsilon=0$). As for the inductive step, the proof consists of the following  steps (which we prove below):
\begin{enumerate}
    \item \label{itm:claim1}
    If the first three jobs choose their zero-cost machines, then all subsequent jobs implement the subgame perfect equilibrium on the same instance with $k'=k-1$. The cost of $J_1$, in this case, is $k'+2=k+1$.
    \item \label{itm:claim2} If the first job $J_1$ chooses $M_2$, then both $J_2$ and $J_3$ choose $M_1$.
    \item \label{itm:claim3} If the first job $J_1$ chooses $M_1$, then all subsequent players will choose the gray allocation (and therefore, the cost of $J_1$ is $k+1$ in this scenario as well). 
\end{enumerate}
The first two steps above imply that, if $J_1$ chooses machine $M_2$, then her cost  is $k+1$. Step~\ref{itm:claim3} says that the same cost occurs if $J_1$ chooses $M_1$. We assume the tie-breaking rule for player $J_1$, in this case, is that she prefers the cost $k+1$ on the first machine $M_1$. Therefore, by Step~\ref{itm:claim3}, $J_1$ will choose $M_1$ and all players choose the gray allocation in the subgame perfect equilibrium. Note that the cost on machine $M_1$ and $M_2$ is $k+1$ and $k+2$, respectively.

Next, we prove the three steps above:

\begin{proof}[of Step~\ref{itm:claim1}] 
	Note that the sequence starting from $J_4$ is the same sequence for $k' =k-1$. Since the first three jobs did not put any cost on the machines, we can apply the inductive hypothesis and assume that all subsequent players play the subgame perfect equilibrium. The resulting cost on machine $M_2$ will be $k'+2=k+1$, and this is the machine chosen by $J_1$. \qed
\end{proof}

\begin{proof}[of Step~\ref{itm:claim2}] 
	Choosing $M_2$ costs $J_2$ and $J_3$ at least $k$, no matter what the subsequent players do. If they instead choose $M_1$, by the previous claim, their cost is $k'+1=k$ which they both prefer given their tie-breaking rule. \qed
\end{proof}

%
%
%



\begin{proof}[of Step~\ref{itm:claim3}]
In this case, where $J_1$ is on machine $M_1$, we assume different tie-breaking rules for the last two players $J_{3k - 2}$ and $J_{3k - 1}$, depending on which of the two players $J_2$ and $J_3$ choose machine $M_2$.

{\bf Case 1:} player $J_2$  chooses machine $M_1$. In this case we assume that  player $J_3$ breaks ties in favor of $M_2$: we will show that choosing  $M_2$ results in the cost of $k+1$ in the end, instead of some cost on machine $M_1$, which we already know is at least $k+1$, because $J_1$ is already on machine $M_1$. If job $J_3$ gets assigned to machine $M_2$, then by backward induction we can show the following claim:

\begin{claim}
    No player among $J_4,\ldots, J_{3k-3}$ gets assigned to machine where she has non-zero cost.
\end{claim} 

\begin{proof}[of Claim]
Suppose none of them joins the non-zero cost machine. Then the last two players can add at most $3$ to the cost $k+1$ on the first machine and at most cost $2$ to the cost $k$ on the second machine. On the other hand, any job among $J_4, \ldots, J_{3k-3}$ adds at least that cost to the non-zero cost machine she joins. By backward induction we can assume that none of them chooses non-zero cost machine. 
\qed\end{proof}

Because of the last claim, only the last two players are left to decide. We assume that player $J_{3k-2}$ prefers to choose machine $M_1$ and pays the cost $k+2$ instead of choosing machine $M_2$ and paying the cost $k+2$ on that machine, while the last player $J_{3k-1}$ prefers to choose machine $M_2$ and incur the cost $k+1$. Therefore, job $J_2$ pays the cost $k+2$, but in this case we can assume that she prefers cost $k+2$ that she has to pay on machine $M_2$, given this is achievable. 

{\bf Case 2:} job $J_2$  gets assigned to machine $M_2$, as in the claimed subgame perfect equilibrium state. Similarly to the previous case, by backward induction we conclude that all players $J_3,\ldots, J_{3k-3}$ get assigned to the machine where they have cost $0$, and the last two jobs $J_{3k-2}$ and $J_{3k-1}$ choose machine $M_2$, here again we assume that player $J_{3k-2}$ prefers to pay $k+2$ on machine $M_2$ than to pay the same cost on machine $M_1$. In this way the cost on machine $M_1$ is $k+1$, while the cost on machine $M_2$ is $k+2$.

This finishes the proof of Step~\ref{itm:claim3} and of the theorem.  \qed

\end{proof}

\end{proof}

Note that it is important to have seemingly equivalent jobs $J_2$ and $J_3$. They use different tie-breaking rules, which creates the asymmetry between them and increases the \SPoA. 


We solved linear programs with strict inequalities obtained from the subgame perfect equilibria tree structure given in the example from the proof of  \Cref{counterexample}, by introducing small $\epsilon$ for strict inequalities. We found solutions for $n=8$ and $n=11$, that is linear programs are feasible. Therefore, at least for small $n$'s we can drop the assumption about tie-breaking rules. As the solutions replace the $\epsilon$ terms by rather more complicated coefficients,  we do not present them here. For the general case, we conjecture that the statement of \Cref{counterexample} holds without the assumption on the tie-breaking rules, and that the latter are merely used to make the analysis easier: 

\begin{conjecture}\label{conj:SPOA}
     For two machines, the \SPoA is at least linear in $n$.
\end{conjecture}



\section{Linear upper bound for \SPoA{}}
\label{sec:UP_for_spoa}
\paragraph{Additional notation.}
To prove the upper bound for \SPoA{}, we introduce some additional notation.
We define a vector $D=(d_1,d_2)$ of initial load on the machines before the jobs play the game.
Consequentially, the load of each machine $i$ becomes
\[
	l_i(D,s) = d_i + \sum_{j: s_j=i}p_{ij} \,,
\]
where $s=(s_1,s_2,\ldots,s_n)$ is the schedule (SPE) achieved by the jobs playing the game with initial load $D$ on the machines;
the cost of each job $j$ is 
\[
	cost_j(D,s) = l_{s_j}(D,s) \,.
\]
The notation for the makespan is renewed as $SPE_D(P)$ for the SPE with initial load $D$.
Additionally, we define $\Delta SPE(P)$ as the maximum possible increase of the makespan due to the players, with processing time $P$, for any initial load $D$:
\[
	{\Delta SPE(P)} = \sup_D\left\{ SPE_D(P) - ||D||_\infty \right\} \,.
\]
Moreover, for a given $P$, we use $P_{[u:v]}$ to represent the processing times only for jobs $(J_u,J_{u+1},\ldots,J_v)$, that is, $P_{[u:v]} = (p_{ij})$ where $j=u,u+1,\ldots,v$.

We first prove a key lemma showing that each job can only contribute a certain amount (bounded by the total minimum processing time) to the makespan:
\begin{lemma}\label{lem:2unrelated.key}
	$\Delta SPE \left( P_{[\ell:n]} \right)  - \Delta SPE \left( P_{[\ell+1:n]} \right)  \le \sum_{j = \ell}^n \min_i p_{ij}$ for $\ell = 1,2,\ldots,n-1$.
\end{lemma}
\begin{proof}
	For an arbitrary $\ell \in \{1,2,\ldots,n-1\}$, giving a processing time $P_{[\ell:n]}$ and an initial load $D$, we suppose w.l.o.g. that job $\ell$ chooses machine $1$.
	After job $\ell$ makes its decision, choosing machine 1, the game consists of the rest players, $\{\ell +1 ,\ell +2, \ldots,n\}$ starting with a new initial load $D' = (d_1 + p_{1\ell}, d_2)$.
	Thus,
	\[
		SPE_{D} \left( P_{[\ell:n]} \right) = SPE_{D'} \left( P_{[\ell+1:n]} \right)  \,.
	\]

	We first discuss the trivial case when $d_1 + p_{1\ell} < d_2$.
	In this case, it holds that $||D'||_\infty = ||D||_\infty = d_2$, which indicates that 
	\begin{equation*}
	SPE_{D} \left( P_{[\ell:n]} \right) =   SPE_{D'} \left( P_{[\ell+1:n]} \right) 
										\le \Delta SPE \left( P_{[\ell+1:n]} \right) + ||D'||_\infty 
										=	\Delta SPE \left( P_{[\ell+1:n]} \right) + ||D||_\infty \,,
	\end{equation*}
	that is,
	\begin{equation}\label{eq:2unrelated.case1}
		SPE_{D} \left( P_{[\ell:n]} \right) - ||D||_\infty \le \Delta SPE \left( P_{[\ell+1:n]} \right) \,.
	\end{equation}

	We then consider the other case $d_1 + p_{1\ell} \ge d_2$, that is, $||D'||_\infty = d_1 + p_{1,\ell}$.
	Let $s' = (s_{\ell+1},s_{\ell+2},\ldots,s_{n})$ be the schedule of the jobs $\{J_{\ell +1} ,J_{\ell +2}, \ldots,J_n\}$ playing with initial load $D'$.
	We know that
	\begin{equation}\label{eq:2unrelated.main}
		SPE_{D} \left( P_{[\ell:n]} \right) =   SPE_{D'} \left( P_{[\ell+1:n]} \right) = \max \{ l_1(D',s'), l_2(D',s') \} \,.
	\end{equation}
	\begin{claim}\label{claim:ub2unrelated}
		$\max \{ l_1(D',s'), l_2(D',s') \} \le l_1(D',s') + \sum_{j = \ell+1}^n \min_i p_{ij}$.
	\end{claim}
	\begin{figure}[tb]
		\centering
		\includegraphics[width=0.5\textwidth]{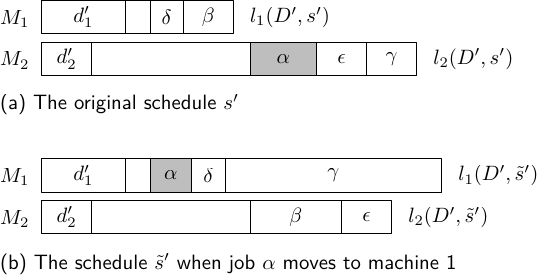}
		\caption{Proof of Claim}
		\label{fig:2unrelated}
	\end{figure}
	\begin{proof}[of Claim]
		If $\max \{ l_1(D',s'), l_2(D',s') \} = l_1(D',s')$, the claim is obviously true.
		Thus we only need to prove $l_2(D',s') \le l_1(D',s') + \sum_{j = \ell+1}^n \min_i p_{ij}$.

		Let $J_{\alpha}$ denote the last job who chooses machine 2 and has a longer processing time on machine 2 than on machine 1, i.e., $p_{1\alpha} \le p_{2\alpha}$.
		Let $\tilde{s}'$ be the new schedule if job $J_{\alpha}$ chooses machine 1.
		In schedule $\tilde{s}'$, the decisions of jobs $\{J_{\alpha+1},J_{\alpha+2},\ldots,J_n\}$ may be different from schedule $s'$.
		We divide the jobs $\{J_{\alpha+1},J_{\alpha+2},\ldots,J_n\}$ into $4$ subsets, namely, $\beta$, $\gamma$, $\delta$ and $\epsilon$, depending on the differences between $s'$ and $\tilde{s}'$ (as shown in Figure~\ref{fig:2unrelated}):
		\begin{itemize}
			\item Jobs in $\beta$ are on machine 1 in $s'$ but on machine 2 in $\tilde{s}'$;
			\item Jobs in $\gamma$ are on machine 2 in $s'$ but on machine 1 in $\tilde{s}'$;
			\item Jobs in $\delta$ are on machine 1 in both $s'$ and $\tilde{s}'$;
			\item Jobs in $\epsilon$ are on machine 2 in both $s'$ and $\tilde{s}'$.
		\end{itemize}

		For simplicity, we define some notations to represent the total processing time of the job sets:
		\begin{gather*}
		\alpha_1 = p_{1\alpha} ~,\qquad \quad 
		\beta_1 = \sum_{j \in \beta} p_{1j} ~,\qquad \quad
		\gamma_1 = \sum_{j \in \gamma} p_{1j} ~,\qquad \quad
		\epsilon_1 = \sum_{j \in \epsilon} p_{1j} ~, \\
		\alpha_2 = p_{2\alpha} ~,\qquad \quad
		\beta_2 = \sum_{j \in \beta} p_{2j} ~,\qquad \quad
		\gamma_2 = \sum_{j \in \gamma} p_{2j} ~,\qquad \quad
		\epsilon_2 = \sum_{j \in \epsilon} p_{2j} ~.
		\end{gather*}
		
		In the following, we will prove $l_2(D',s') \le l_1(D',s') + \alpha_1 + \gamma_2 + \epsilon_2$.
		According to the definition of job $J_{\alpha}$, we know the jobs in $\gamma$ and $\epsilon$ have shorter processing times on machine 2, thus it follows that $\alpha_1 + \gamma_2 + \epsilon_2 \le \sum_{j = \alpha}^n \min_i p_{ij} \le \sum_{j = \ell+1}^n \min_i p_{ij}$, meaning that the claim is true.

		We prove the inequality by contradiction, assuming that
		\begin{equation}\label{eq:2unrelated.contradiction}
			l_1(D',s') < l_2(D',s') - \alpha_1  - \gamma_2 - \epsilon_2 \,.
		\end{equation}
		Intuitively, when job $J_{\alpha}$ moves to machine 1, if the following jobs $\{J_{\alpha+1},J_{\alpha+2},\ldots,J_n\}$ make the same decisions as in schedule $s'$, the cost of job $J_{\alpha}$ (i.e., $l_1(D',s') + \alpha_1$) is lower than the cost (i.e., $l_2(D',s')$) of job $J_{\alpha}$ in the original schedule $s'$, since inequality \eqref{eq:2unrelated.contradiction} holds.
		To guarantee that job $J_{\alpha}$ has no incentive to move to machine 1, the cost of $J_{\alpha}$ when moving to machine 1 should be higher than $l_2(D',s')$.
		In other words, when job $J_{\alpha}$ moves to machine 1, there must be some jobs (i.e., $\gamma$) originally on machine 2 also move to machine 1, increasing the load of machine 1 to a value higher than $l_2(D',s')$.
		Moreover, the incentive of the jobs in $\gamma$ moving to machine 1 is due to the increase of the load of machine 2 by some jobs (i.e., $\beta$) originally on machine 1 moving to machine 2 when $J_{\alpha}$ moves to machine 1.
		In the following, we will show that the jobs in $\beta$ have no incentive to move to machine 2 if inequality \eqref{eq:2unrelated.contradiction} holds, which gives a contradiction.

		By definition, the loads of machine 1 and 2 in $\tilde{s}'$ are
		\begin{gather}
			\label{eq:l1}
			l_1(D',\tilde{s}') = l_1(D',s') + \alpha_1 - \beta_1  + \gamma_1 \,,\\
			\label{eq:l2}
			l_2(D',\tilde{s}') = l_2(D',s') - \alpha_2 + \beta_2 - \gamma_2 \,.
		\end{gather}
		Since schedule $s'$ is a equilibrium, it holds that the cost of job $J_{\alpha}$ in $s'$ is no greater than that in $\tilde{s}'$, that is, 
		\begin{equation}\label{eq:2unrelated.alphacost}
			l_2(D',s') \le l_1(D',\tilde{s}') \,.
		\end{equation}
		First, we know that the job set $\gamma$ is nonempty, otherwise
		\begin{align*}
			l_1(D',\tilde{s}') & = l_1(D',s') + \alpha_1 - \beta_1  + \gamma_1 && \hspace{-2cm} \text{by \eqref{eq:l1}}\\
							   & < l_2(D',s') - \alpha_1  - \gamma_2 - \epsilon_2 + \alpha_1 - \beta_1 + \gamma_1 && \hspace{-2cm} \text{by \eqref{eq:2unrelated.contradiction}}\\
							   & \le l_2(D',s') \,, 
		\end{align*}
		which contradicts with \eqref{eq:2unrelated.alphacost}.

		Now that $\gamma$ is nonempty, in schedule $\tilde{s}'$, the cost of jobs in $\gamma$ is $l_1(D',\tilde{s}')$.
		The reason why jobs in $\gamma$ move to machine 1 when job $J_{\alpha}$ moves to machine 1 is that if any job in $\gamma$ stays at machine 2, the cost will be higher than $l_1(D',\tilde{s}')$.
		Thus we have $l_2(D',\tilde{s}') + \gamma_2 \ge l_1(D',\tilde{s}')$.
		Since $l_1(D',\tilde{s}') \ge l_2(D',s')$ (inequality \eqref{eq:2unrelated.alphacost}), it follows that $l_2(D',\tilde{s}') + \gamma_2 \ge l_2(D',s')$.
		Together with \eqref{eq:l2} we get $\beta_2 \ge \alpha_2$.

		The cost of jobs in $\beta$ in $\tilde{s}'$ is $l_2(D',\tilde{s}')$.
		However, we notice that if jobs in $\beta$ choose machine 1 (after job $J_{\alpha}$ chooses machine 1), the cost of jobs in $\beta$ is at most $l_1(D',s') + \alpha_1$ (since jobs in $\gamma$ will choose machine 2 in this case), and the cost $l_1(D',s') + \alpha_1$ is smaller than $l_2(D',\tilde{s}')$ because
		\begin{align*}
			l_1(D',s') + \alpha_1 & < l_2(D',s') - \gamma_2 - \epsilon_2 && \hspace{-2cm}\text{by \eqref{eq:2unrelated.contradiction}}\\
								  & \le l_2(D',s') - \gamma_2 \\
								  & \le l_2(D',s') - \alpha_2 + \beta_2 - \gamma_2 && \hspace{-2cm}\text{by $\beta_2 \ge \alpha_2$} \\
								  & = l_2(D',\tilde{s}') && \hspace{-2cm}\text{by \eqref{eq:l2}} \,.
		\end{align*}	
		Therefore, the jobs in $\beta$ have no incentive to choose machine 2 in $\tilde{s}'$, since the cost of choosing machine 1 is lower. 
		In other words, schedule $\tilde{s}'$ is not a equilibrium, which is a contradiction.
		Thus, we conclude that $l_1(D',s') \ge l_2(D',s') - \alpha_1  - \gamma_2 - \epsilon_2$, which proves this claim.
		\qed
	\end{proof}

	From \eqref{eq:2unrelated.main} and the above claim, we have
	\begin{align}
		SPE_{D} \left( P_{[\ell:n]} \right) & \le l_1(D',s') + \sum_{j = \ell+1}^n \min_i p_{ij} \nonumber\\
											\label{eq:2unrelated.D'}
											& \le ||D'||_{\infty} + \Delta SPE \left( P_{[\ell+1:n]} \right) + \sum_{j = \ell+1}^n \min_i p_{ij}  \,.
	\end{align}
	Moreover, the cost of job $J_{\ell}$, namely $l_1(D',s')$, must be no greater than the cost of choosing machine 2:
	\[
		l_1(D',s') \le l_2(D'',s'') \,,
	\]
	where $D'' = (d_1, d_2 + p_{2\ell})$ is the initial load if job $J_{\ell}$ chooses machine 2, and $s''$ is the schedule of the jobs $\{J_{\ell +1} ,J_{\ell +2}, \ldots,J_n\}$ playing with initial load $D''$.
	Thus we obtain
	\begin{align}
		SPE_{D} \left( P_{[\ell:n]} \right) & \le l_1(D',s') + \sum_{j = \ell+1}^n \min_i p_{ij} \nonumber\\
											& \le l_2(D'',s'') + \sum_{j = \ell+1}^n \min_i p_{ij} \nonumber\\
											\label{eq:2unrelated.D''}
											& \le ||D''||_{\infty} + \Delta SPE \left( P_{[\ell+1:n]} \right) + \sum_{j = \ell+1}^n \min_i p_{ij} \,.
	\end{align}
	
	According to \eqref{eq:2unrelated.D'} and \eqref{eq:2unrelated.D''}, it holds that
	\[
		SPE_{D} \left( P_{[\ell:n]} \right) \le \min \left\{ ||D'||_{\infty},||D''||_{\infty} \right\} + \Delta SPE \left( P_{[\ell+1:n]} \right) + \sum_{j = \ell+1}^n \min_i p_{ij} \,.
	\]
	As $D' = (d_1 + p_{1\ell}, d_2)$ and $D'' = (d_1, d_2 + p_{2\ell})$, we know
	\[
		\min \left\{ ||D'||_{\infty},||D''||_{\infty} \right\} \le \max \left\{ d_1, d_2 \right\} + \min \left\{ p_{1\ell}, p_{2\ell} \right\} \,.
	\]
	Thus it follows that
	\begin{align*}
		SPE_{D} \left( P_{[\ell:n]} \right) & \le \max \left\{ d_1, d_2 \right\} + \min \left\{ p_{1\ell}, p_{2\ell} \right\} + \Delta SPE \left( P_{[\ell+1:n]} \right) + \sum_{j = \ell+1}^n \min_i p_{ij} \\
											& = ||D||_{\infty} + \Delta SPE \left( P_{[\ell+1:n]} \right) + \sum_{j = \ell}^n \min_i p_{ij} \,,
	\end{align*}
	that is, 
	\begin{equation}\label{eq:2unrelated.case2}
		SPE_{D} \left( P_{[\ell:n]} \right) - ||D||_{\infty} \le \Delta SPE \left( P_{[\ell+1:n]} \right) + \sum_{j = \ell}^n \min_i p_{ij} \,.
	\end{equation}

	Since inequality \eqref{eq:2unrelated.case1} holds for the case $d_1 + p_{1\ell} < d_2$, and inequality \eqref{eq:2unrelated.case2} holds for the case $d_1 + p_{1\ell} \ge d_2$, we obtain that inequality \eqref{eq:2unrelated.case2} holds for any $D$, that is,
	$\Delta SPE \left( P_{[\ell:n]} \right) \le \Delta SPE \left( P_{[\ell+1:n]} \right) + \sum_{j = \ell}^n \min_i p_{ij}$, which concludes the proof of the lemma.
	\qed
\end{proof}

\begin{theorem}\label{thm:UB4SPoA}
	For two machines, the \SPoA is at most $2(n-1)$.
\end{theorem}
\begin{proof}
	Applying Lemma~\ref{lem:2unrelated.key}, we have
	\begin{align*}
		\Delta SPE \left( P_{[1:n]} \right) & \le \Delta SPE \left( P_{[2:n]} \right) + \sum_{j = 1}^n \min_i p_{ij} \\
											& \le \Delta SPE \left( P_{[3:n]} \right) + 2\sum_{j = 1}^n \min_i p_{ij} \\
											& \le \ldots \\
											& \le (n-1) \sum_{j = 1}^n \min_i p_{ij} \,.
	\end{align*}
	Since the optimal cost is at least $OPT \ge \sum_{j = 1}^n \min_i p_{ij}/{2}$ (for 2 machines), it follows that
	\[
		\SPoA \le \frac{\Delta SPE \left( P_{[1:n]} \right)}{OPT} \le 2(n-1) \,,
	\]
	which completes the proof.
	\qed
\end{proof}


\section{Linear upper bound on the \SPoS}
\label{sec:linear_upper_bound}

In this section, we give a \emph{linear upper bound} on the  sequential price of stability for two machines (\Cref{th:SPoS} below). Unlike in the case of the sequential price of anarchy, here we have the freedom to  choose the order of the players.  Each player can choose \emph{any}  tie-breaking rule. Since we consider a full information setting, the tie-breaking rules are also public knowledge.

Though finding the best order can be difficult, we found that a large set of permutations already gives a linear upper bound on \SPoS. In particular, it is enough that the authority divides the players into \emph{two groups} and puts players in the first group first, followed by the players from the second group. Inside each group players can form \emph{any order}. The main result of this section is the following theorem:

\begin{theorem}\label{th:SPoS}
For two machines, the \SPoS is at most $\frac{n}{2}+1$.
\end{theorem}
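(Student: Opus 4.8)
The plan is to give an explicit grouping rule for the players and an accompanying tie-breaking suggestion, then bound the makespan of the resulting subgame-perfect equilibrium. Recall that $OPT(P)$ denotes the optimum makespan; by normalizing, I would assume $OPT(P)=1$, so that in some optimal schedule every machine has load at most $1$. This means each job $J_j$ has, on its ``optimal'' machine, processing time $p_{ij}\le 1$, and the total processing time placed on each machine in the optimum is at most $1$. The goal is then to exhibit an order $\sigma$ (together with a tie-breaking rule) so that $SPE(P,\sigma)\le \frac{n}{2}+1$.

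First I would fix an optimal schedule and classify each job by the machine it occupies there. The natural split into \emph{two groups} suggested in the theorem is: let group $A$ be the jobs assigned to $M_1$ in the optimum and group $B$ the jobs assigned to $M_2$. I would order the players so that all of group $A$ moves first (in any internal order), followed by all of group $B$ (in any internal order), and suggest the tie-breaking rule that each player prefers its own optimal machine. The heart of the argument is an invariant maintained along the backward-induction / forward-arrival analysis: because a rational player never chooses a machine giving cost larger than it could guarantee, and each player can always guarantee cost at most $1$ by going to its optimal machine (whose total optimal load is $\le 1$), no player ever ends up paying more than roughly the current minimum achievable load plus its own optimal-machine time. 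Concretely, I would argue that when job $J_j$ arrives and makes its subgame-perfect choice, its realized cost is bounded by the load it would incur on its own optimal machine \emph{at that moment}, and that this quantity never exceeds $\frac{n}{2}+1$.

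The quantitative bound would come from counting. Since $OPT=1$, every individual $p_{ij}$ that appears on a machine in the equilibrium is at most $1$ (a job whose only option exceeded $1$ would prefer its optimal machine, contradicting rationality), and at most $n$ jobs are scheduled in total. The factor $\frac{n}{2}$ rather than $n$ is where the grouping pays off: the two-group order balances how many jobs can pile onto a single machine before a later player, facing a high load, deviates to the other machine. I would show that at any point at most about half of the jobs can accumulate on the more-loaded machine before the equilibrium forces the remaining jobs elsewhere, so the worst-case load is $\le \frac{n}{2}\cdot 1 + 1$, where the additive $+1$ absorbs the single job whose arrival tips the balance (its own processing time being at most $1$).

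The main obstacle I anticipate is making the backward-induction reasoning rigorous rather than heuristic: in an extensive-form game a player's realized cost depends on the \emph{entire} subtree of responses of all later players, not just on a greedy one-shot comparison, so I cannot simply say ``$J_j$ picks the currently lighter machine.'' The delicate step is to prove that, under the chosen order and tie-breaking rule, the subgame-perfect response of the later group is \emph{predictable} enough that each player's guaranteed cost is genuinely bounded by its optimal-machine load. I would handle this by an induction on the number of remaining players, showing that from any reachable state the subgame value (the eventual makespan-contribution seen by the current mover) is controlled by the partial loads already committed; the grouping into $A$ before $B$ is precisely what keeps this induction from blowing up, since once we are inside group $B$ the optimal machine of every remaining player is the same ($M_2$), which tames the case analysis. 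Verifying that the suggested tie-breaking rule is consistent (i.e.\ that the players' rational choices actually realize it, so it is a legitimate subgame-perfect equilibrium and not merely a convenient assumption) is the remaining technical point to nail down.
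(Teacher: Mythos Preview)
Your setup --- split the jobs by their machine in an optimal schedule and let one group play before the other --- is exactly the order the paper uses. The analysis, however, does not go through as you describe it, and the gap is not merely a matter of rigor.

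First, the assertion ``each player can always guarantee cost at most $1$ by going to its optimal machine (whose total optimal load is $\le 1$)'' is false. On unrelated machines, if a group-$A$ player sits on $M_1$, later players (from either group) may rationally join $M_1$ with processing times far exceeding $1$ on that machine; nothing about $OPT=1$ bounds a job's processing time on its \emph{non}-optimal machine. For the same reason the sentence ``every individual $p_{ij}$ that appears on a machine in the equilibrium is at most $1$'' is unjustified: a job whose non-optimal processing time is, say, $100$ may still choose that machine in an SPE if the alternative subtree is even worse. So the counting argument ``at most $n/2$ jobs, each contributing at most $1$'' has no foundation.

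Second, you never use that one of the two groups has size at most $n/2$; without singling out the \emph{smaller} group to move first there is no mechanism in your argument that produces the factor $\tfrac{n}{2}$ rather than $n$.

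The paper's proof keeps your order but argues quite differently. It lets the smaller group (size $k\le n/2$) move first and analyzes the \emph{first player who deviates} from the optimal allocation. If that player lies in the second group, a direct comparison shows she cannot improve (here, and only here, ``stay on your optimal machine'' really does guarantee cost $\le OPT$, because all earlier players followed the optimum). If the first deviator is the $d$-th player of the first group, an induction on $t=k+1-d$ shows the resulting makespan is at most $(t+1)\cdot OPT$: staying would inductively cost at most $t\cdot OPT$, so deviating costs at most $t\cdot OPT$ on $M_2$, and any second-group player who later jumps to $M_1$ would otherwise pay at most $(t+1)\cdot OPT$ on $M_2$. Taking $d=1$ gives $(k+1)\cdot OPT\le(\tfrac{n}{2}+1)\cdot OPT$. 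The bound comes from the depth of this deviation cascade inside the small group, not from any per-job processing-time bound.
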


The proof is in the appendix.
This result cannot be extended to \emph{three} or more machines, because the third machine changes the logic of the proof. In particular, we can no longer assume that the players on the second machine in the optimal assignment can guarantee low costs for themselves  by simply staying on that machine.
For two machines, we conjecture that actually there is always an order which leads to the optimum: 

\begin{conjecture}
     For two machines, the {\bf SPoS} is $1$.
\end{conjecture}

Though we are not able to prove this conjecture, in the next section, we introduce a more restricted solution concept, and show that in that case the optimum can be achieved. 

\section{Achieving the optimum: the \aSPoS}
\label{sec:optimum_sequence}

In this section, we study the adaptive sequential price of stability. Unlike the previous models, here we assume that there is some authority, which has full control over the order of the players' arrival in the game. It does not only fix the initial complete order, but can also change the order of arrivals depending on the decision that previous players made. 
On the other hand, the players still have the freedom to choose any action in a given state, each of them aiming at  minimizing her own final cost.
The players also know the whole decision tree, and thus the way the authority chooses the order. As in the previous section, each player can use \emph{any} tie-breaking rule, and the tie-breaking rules are also known to all players.

This model is the closest instantiation of a general extensive form game compared to the previously studied models in this paper. 
In this way, the authority has an option to punish players for deviating from the optimal path (path leading to a social optimum) by placing different players after the deviating decisions of the deviating player. As a result, rational players may achieve much better solutions in the end. The following theorem shows that achieving the optimum solution is possible for $2$ machines:


\begin{theorem}\label{optimum_sequence}
For two machines, the $\aSPoS$ is $1$.
\end{theorem}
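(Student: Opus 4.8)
The plan is to prove a slightly stronger statement by induction, namely that for two machines the adaptive authority can enforce \emph{any prescribed optimal assignment} $\phi$, and that this holds even on instances carrying arbitrary \emph{initial loads} $(a_0,b_0)$ already present on $M_1,M_2$; the theorem is the special case $a_0=b_0=0$. I induct on the number of remaining jobs. The base case of zero jobs is immediate, since the makespan is $\max(a_0,b_0)$ and there is nothing to decide. For the inductive step I fix the target optimal assignment $\phi$, let $OPT$ be its makespan, and—relabelling if necessary—let $M_1$ be a machine attaining this makespan and $j^\ast$ a job that $\phi$ places on $M_1$. The authority designates $j^\ast$ to move first: the subtree reached when $j^\ast$ plays $M_1$ will be built from the induction hypothesis, while the subtree reached after a deviation to $M_2$ will be designed as a \emph{punishment}. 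Ties are broken in favour of the prescribed move, so that following $\phi$ is the subgame-perfect choice whenever deviating is not \emph{strictly} cheaper.

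On the ``follow'' branch I must check that the recursion is legitimate. After $j^\ast$ commits to $M_1$, the residual instance consists of the remaining jobs with the load of $M_1$ increased by $p_{1 j^\ast}$. The restriction $\phi|_{S'}$ of $\phi$ to the remaining jobs $S'$ is an optimal residual assignment: its makespan is still $OPT$, and any residual assignment of makespan below $OPT$, recombined with $j^\ast$'s fixed choice on $M_1$, would give a full assignment beating $OPT$, contradicting optimality. Hence the induction hypothesis lets the authority enforce $\phi|_{S'}$ as the residual subgame-perfect equilibrium, whose makespan is $OPT$ and in which $j^\ast$'s own cost equals the final load of $M_1$, namely $OPT$. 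Thus $j^\ast$'s ``stay cost'' is exactly $OPT$.

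The whole argument then reduces to the punishment branch, which I expect to be the main obstacle. I must show that if $j^\ast$ instead moves to $M_2$, the authority can order the remaining jobs so that, in the resulting equilibrium, the final load of $M_2$—which is precisely $j^\ast$'s cost after deviating—is at least $OPT$, so that the deviation cannot be strictly profitable. The intended mechanism is adaptive \emph{piling}: with $j^\ast$ already on $M_2$, the authority repeatedly schedules next a remaining job for which $M_2$ is currently the cheaper machine, so that rational play keeps stacking load onto $M_2$, deferring the jobs that would only inflate $M_1$. The difficulty is that the authority controls only the order and not the destinations: once $M_2$ has grown, later jobs may defect back to $M_1$ and thereby cap the height of $M_2$. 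Turning the intuition into a guaranteed bound of at least $OPT$ therefore requires genuinely using the optimality of $\phi$—through a load-counting or exchange argument bounding how much the authority can force onto a designated machine—and very likely a careful, possibly extremal, choice of $j^\ast$ among the jobs on $M_1$, formalised by a nested induction lower-bounding the maximum enforceable load on a fixed machine. Once this punishment lemma is established, the two branches together make following $\phi$ the unique subgame-perfect outcome, the enforced makespan is $OPT$, and since trivially $\aSPoS \ge 1$ we conclude $\aSPoS = 1$.
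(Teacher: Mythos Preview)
Your overall architecture---induction on the remaining jobs, enforcing the constrained optimum at every node, and selecting one ``safe'' player to move first---is exactly the paper's. The gap is precisely where you flag it: the punishment branch. You leave the punishment lemma as a conjecture and propose adaptive \emph{piling} onto $M_2$; that mechanism does not work, because once $M_2$ grows the later jobs rationally defect to $M_1$ and you have no handle on the final $M_2$-load. The paper closes this gap with a much simpler idea: the punishment after a deviation is nothing ad hoc, it is again a \emph{constrained optimum}---the optimal completion given that $j^\ast$ has already landed on the wrong machine. By the very same induction hypothesis you are already using on the ``follow'' branch, the authority can enforce that new constrained optimum on the deviation subtree. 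So both children of the root are handled by the same recursion, and the only thing left to prove is that some player exists whose cost in the new constrained optimum (after deviating) is at least her cost in the old one.

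Here your restriction ``$j^\ast$ is a job $\phi$ places on $M_1$'' (the bottleneck) is too narrow and is why you cannot finish. The paper instead first looks at a job $J'$ that the current constrained optimum puts on the \emph{non}-bottleneck machine $M_1$ (with their labelling $M_2$ is the bottleneck). If moving $J'$ to $M_2$ makes the $M_2$-load in the new constrained optimum exceed $OPT$, take $j^\ast=J'$ and you are done. If not, then in the new constrained optimum $M_1$ has become the bottleneck, so some job $J''$ must have crossed from $M_2$ to $M_1$; an exchange argument shows $j^\ast=J''$ works. Either way the ``safe'' player is found without any load-piling lemma or nested induction. In short: keep your induction, replace the piling punishment by ``recursively enforce the new constrained optimum'', and replace the unspecified extremal choice on the bottleneck side by the paper's two-case selection starting from the non-bottleneck side.
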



The proof is in the appendix.
The previous result cannot be extended to more than $2$ machines:

\begin{theorem}\label{th:adaptive:LB}
For three or more machines, the \aSPoS is at least $\frac{3}{2}$.
\end{theorem}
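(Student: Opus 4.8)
The plan is to exhibit a single instance $P\in\mathcal{P}_{n,3}$ on three machines for which \emph{every} decision tree yields a subgame perfect equilibrium of makespan at least $\tfrac32\,OPT(P)$; since $\aSPoS=\sup_{P}\min_{T}SPE(P,T)/OPT(P)$, one such instance suffices. I normalize $OPT(P)=2$ and design $P$ so that the attainable makespans exhibit a \emph{gap}: every assignment has makespan either $2$ (optimal) or at least $3$, with no value strictly in between. With this gap the theorem reduces to a purely combinatorial claim: \emph{no optimal (makespan-$2$) assignment is the equilibrium outcome of any tree in $\mathcal{T}_{n,3}$}. Indeed, if every tree is driven off all optimal assignments, its backward-induction outcome must land on a makespan-$\ge 3$ leaf, giving the ratio $3/2$.

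The instance is built around a \emph{cyclic threat} among the three machines, which is exactly the obstruction that the two-machine argument of \Cref{optimum_sequence} cannot produce. Recall that the upper bounds of \Cref{th:SPoS,optimum_sequence} rely on a player being able to \emph{guarantee} cost at most $OPT$ by staying on one machine (see the remark after \Cref{th:SPoS}); with only two machines such an anchor always exists. I would instead arrange three mobile jobs $x,y,z$ whose cheap alternative placements are cyclically linked — $x$ can cheaply displace $y$, $y$ can cheaply displace $z$, $z$ can cheaply displace $x$ — together with enough (cyclically symmetric) fixed load so that no single assignment realizes all three displacements at once. This is what rules out a globally cheap optimum and preserves the makespan gap, and its effect is that \emph{no} job can secure cost $\le OPT$ by a single action: whatever machine it occupies, the cyclic structure lets a later player flood it, and fleeing merely passes the congestion around the cycle.

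The core step is to show that for every tree the equilibrium leaves every optimal assignment. I would reuse the \emph{constrained-optimum} framework of \Cref{optimum_sequence}: at the root the constrained optimum is the global optimum, and I must prove that, in contrast to the two-machine case, there is \emph{no} safe player $J^{*}$ — every remaining player has a strictly profitable, un-punishable deviation under worst-case (arbitrary) tie-breaking in the sense of \Cref{def:arbitrary-ties}. Concretely, whichever player the authority schedules first, backward induction on the remaining subtree lets the displaced neighbor in the cycle flee to the third machine, so the first player strictly lowers its own cost while forcing some machine to load $3$; because the three displacement incentives form a genuine $3$-cycle, no linear order — adaptive or not — can place all three ``displacers'' after their victims, so some profitable deviation always survives. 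The cyclic symmetry of $P$ reduces this to analyzing a single representative first mover together with its two continuations.

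The main obstacle is precisely this quantification over \emph{all} adaptive trees together with worst-case ties: unlike a fixed-order \SPoA lower bound such as \Cref{counterexample}, here the authority may reorder players in response to each decision, and each optimal assignment is itself a \emph{pure Nash equilibrium} (consistent with $\PoS=1$), so the instability is genuinely \emph{sequential} and must be exhibited against every backward-induction continuation rather than against a single simultaneous deviation. I expect the delicate part to be pinning down constants that simultaneously (i) force the makespan gap, (ii) keep each optimal assignment Nash-stable yet subgame-perfect-unstable, and (iii) make the cyclic obstruction immune to adaptivity; verifying (iii) is where the argument is most easily mistaken, and I would guard it with the symmetry reduction and, if necessary, a small exhaustive check over the finitely many order-types of the critical triple.
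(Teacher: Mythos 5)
Your high-level framing is sound---a single three-machine instance with a makespan gap, such that no adaptive tree's backward-induction outcome lands on an optimal leaf, does suffice---but the proposal never produces such an instance, and the existence of one is precisely the content of \Cref{th:adaptive:LB}. Every load-bearing step is deferred: the processing times of the ``cyclically linked'' jobs are never specified, your own conditions (i)--(iii) are acknowledged as unverified (``pinning down constants'', ``a small exhaustive check''), and the key claim that no order, adaptive or not, can place all three displacers after their victims is an assertion about orderings, not a proof about subgame perfect equilibria. In an adaptive tree the authority reacts to a deviation by rescheduling all remaining players so as to punish it; what must be shown is that some player's deviation from the constrained optimum stays profitable against the \emph{worst} punishment the authority can arrange by reordering, and that cannot be checked without concrete numbers. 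As written, this is a research plan rather than a proof, and it is not even clear that a cyclically symmetric instance with your gap property exists.

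For comparison, the paper's proof needs no cyclic symmetry and no displacement cycle. It takes three jobs on three machines with processing times (rows indexed by machines $M_1,M_2,M_3$, columns by jobs $J_1,J_2,J_3$) given by $(4-\epsilon,\,2,\,2)$, $(4,\,3,\,3)$, $(6,\,6-\epsilon,\,6-\epsilon)$; the optimum puts $J_1$ on $M_2$ and $J_2,J_3$ on $M_1$, with makespan $4$. Whoever moves first abandons this optimum: if $J_1$ moves first, she takes the cheapest machine $M_1$ (cost $4-\epsilon$), since the two followers then prefer to split onto $M_2$ and $M_3$ rather than join her; if $J_2$ (or $J_3$) moves first, she takes $M_2$ (cost $3$, and no later player joins her there) rather than $M_1$, where her cost would be at least $4$. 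In either case every completion of the resulting partial schedule has makespan at least $6-\epsilon$, so every adaptive tree yields a subgame perfect equilibrium of cost at least $6-\epsilon$ while $OPT=4$, giving the bound $3/2$ as $\epsilon\to 0$. Incidentally, this instance does enjoy your ``gap'' property (every schedule has makespan $4$ or at least $6-\epsilon$), but the destabilizing mechanism is a first mover defecting to a machine where she knows she will be left alone---a much simpler device than a three-cycle of displacements. To repair your argument you would need to exhibit explicit processing times and carry out the finite verification over first movers and their continuations, or simply adopt an instance of this asymmetric form.
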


\begin{proof}
%
Consider the following instance with three machines and three jobs, where the optimum is shown as gray boxes:
\begin{align*}\begin{array}{c||c|c|c|}
&  J_1 & J_2 & J_3 \\ \hline \hline
M_1 & 4 -\epsilon &  \take{2} & \take{2}  \\ \hline
M_2 & \take{4} & 3 & 3  \\ \hline
M_3 & 6 & 6-\epsilon & 6-\epsilon  \\ \hline 
\end{array} &&
\end{align*}
We distinguish two cases for the first player to move (the root of the tree), and show that in neither case the players will implement the optimum:
\begin{enumerate}
	\item \emph{The first to move is $J_1$.} This player will choose the cheapest machine $M_1$, because none will join this machine. Indeed, the second player to move will choose $M_2$ knowing that the last one will then choose $M_3$. 
	\item \emph{The first to move is $J_2$ or $J_3$.} This player will choose $M_2$ and \emph{not} $M_1$. Indeed, if the first player to move, say $J_2$, chooses $M_1$, then either (I) the other two follow also the optimum (which costs $4$ to $J_2$) or (II) they choose another solution,  whose cost is at least $6-\epsilon$. In the latter case, we have the lower bound. In case (I), we argue that choosing $M_2$ is better for $J_2$, because no other player will join: for the following players,  being both on machine $M_1$ is already cheaper than being on $M_2$ with $J_2$.
\end{enumerate}
In the first case, given that $J_1$ is allocated to $M_1$, the cheapest solution costs $6-\epsilon$. In the second case, one among $J_2$ or $J_3$ is allocated to $M_2$. The best solution, in this case, costs again $6-\epsilon$. This completes the proof.
\qed\end{proof}

\begin{remark}\label{rem:thre-machines}
The following example shows that the analysis of \Cref{optimum_sequence} cannot be extended to $3$ machines even in the case of identical machines. Assume that we have $m=3$ machines, the initial loads on these machines are $(0,2,6)$ and there are $3$ jobs left to be assigned with processing times $7, 5$ and $5$. Note that the constrained optimum here is $(10, 9, 6)$, that is the first job with processing time $7$ gets assigned to the second machine $M_2$, while both jobs with processing times $5$ and $5$ get assigned to machine $M_1$. On the other hand, if any of these players chooses different machine their cost is strictly decreasing in the subgame perfect equilibrium solution. We did not find any example showing that  $\aSPoS> 1$ for more than $2$ identical machines, unlike the case of unrelated machines.      

\end{remark}

\section{Conclusions}\label{sec:conclusion}

In this paper, we disprove a conjecture from \cite{Hassin} and give a linear lower bound construction for the sequential price of anarchy. On the other hand, we show linear upper bound. For the best sequence of players, we prove a linear upper bound, that is $4$ times lower than the upper bound for sequential price of anarchy. Moreover, we prove the existence of a sequential extensive game which gives an optimum solution. One possible direction for future research is to investigate whether the sequential price of stability is $1$ for any number of \emph{identical} machines. In this work, we give some evidence that the case of three (or more) machines is different from the case of two machines (see \Cref{th:adaptive:LB}
and \Cref{rem:thre-machines}).

Our linear lower bound on the sequential price of anarchy (\Cref{counterexample}) suggests that subgame perfect equilibria do not guarantee in the worst case a price of anarchy independent of the number of jobs, even for two machines. Though our lower bound is based on a suitable tie-breaking rule, we believe it holds without any tie being involved (Conjecture~\ref{conj:SPOA}).

\paragraph{Acknowledgments.} We are grateful to Thomas Erlebach for spotting a mistake in an earlier proof of Theorem~\ref{optimum_sequence} and for suggesting a fix of the proof. We thank Paul D\"{u}tting for valuable discussions. We also thank anonymous reviewers and seminar participants of OR 2016 for suggestions that improved the paper.  

\bibliographystyle{plain}
\bibliography{sequential,Mybib}

\begin{thebibliography}{}

\bibitem[Abed et~al., 2014]{Abed+Correa+Huang/2014}
Abed, F., Correa, J.~R., and Huang, C.-C. (2014).
\newblock Optimal coordination mechanisms for multi-job scheduling games.
\newblock In {\em Proceedings of the 22nd Annual European Symposium on
  Algorithms (ESA)}, pages 13--24. Springer Berlin Heidelberg.

\bibitem[Andelman et~al., 2009]{andelman2009strong}
Andelman, N., Feldman, M., and Mansour, Y. (2009).
\newblock Strong price of anarchy.
\newblock {\em Games and Economic Behavior}, 2(65):289--317.

\bibitem[Angelucci et~al., 2015]{Angelucci2015}
Angelucci, A., Bil{\`o}, V., Flammini, M., and Moscardelli, L. (2015).
\newblock On the sequential price of anarchy of isolation games.
\newblock {\em Journal of Combinatorial Optimization}, 29(1):165--181.

\bibitem[Anshelevich et~al., 2008]{PoSoriginal}
Anshelevich, E., Dasgupta, A., Kleinberg, J.~M., Tardos, {\'{E}}., Wexler, T.,
  and Roughgarden, T. (2008).
\newblock The price of stability for network design with fair cost allocation.
\newblock {\em {SIAM} J. Comput.}, 38(4):1602--1623.

\bibitem[Awerbuch et~al., 2006]{Awerbuch+etal/2006}
Awerbuch, B., Azar, Y., Richter, Y., and Tsur, D. (2006).
\newblock Tradeoffs in worst-case equilibria.
\newblock {\em Theoretical Computer Science}, 361(2):200--209.

\bibitem[Azar et~al., 2015]{Azar+Fleischer+Jain+Mirrokni+Svitkina/2015}
Azar, Y., Fleischer, L., Jain, K., Mirrokni, V., and Svitkina, Z. (2015).
\newblock Optimal coordination mechanisms for unrelated machine scheduling.
\newblock {\em Operations Research}, 63(3):489--500.

\bibitem[Bil{\`{o}} et~al., 2015]{bilo}
Bil{\`{o}}, V., Flammini, M., Monaco, G., and Moscardelli, L. (2015).
\newblock Some anomalies of farsighted strategic behavior.
\newblock {\em Theory Comput. Syst.}, 56(1):156--180.

\bibitem[Caragiannis, 2013]{Caragiannis/2013}
Caragiannis, I. (2013).
\newblock Efficient coordination mechanisms for unrelated machine scheduling.
\newblock {\em Algorithmica}, 66(3):512--540.

\bibitem[Caragiannis and Fanelli, 2016]{Caragiannis+Fanelli/2016}
Caragiannis, I. and Fanelli, A. (2016).
\newblock An almost ideal coordination mechanism for unrelated machine
  scheduling.
\newblock In {\em Proceedings of the 9th International Symposium on Algorithmic
  Game Theory (SAGT)}, pages 315--326. Springer Berlin Heidelberg.

\bibitem[Chen et~al., 1999]{Chen+Potts+Woeginger/1999}
Chen, B., Potts, C.~N., and Woeginger, G.~J. (1999).
\newblock {\em Handbook of Combinatorial Optimization}, chapter A Review of
  Machine Scheduling: Complexity, Algorithms and Approximability, pages
  1493--1641.
\newblock Springer.

\bibitem[Christodoulou et~al., 2009]{Christodoulou+Koutsoupias+Nanavati/2009}
Christodoulou, G., Koutsoupias, E., and Nanavati, A. (2009).
\newblock Coordination mechanisms.
\newblock {\em Theoretical Computer Science}, 410(36):3327--3336.
\newblock Graphs, Games and Computation: Dedicated to Professor Burkhard Monien
  on the Occasion of his 65th Birthday.

\bibitem[Cole et~al., 2015]{Cole+etal/2015}
Cole, R., Correa, J.~R., Gkatzelis, V., Mirrokni, V., and Olver, N. (2015).
\newblock Decentralized utilitarian mechanisms for scheduling games.
\newblock {\em Games and Economic Behavior}, 92:306--326.

\bibitem[Correa et~al., 2019]{correa2015curse}
Correa, J.~R., de~Jong, J., de~Keijzer, B., and Uetz, M. (2019).
\newblock The inefficiency of nash and subgame perfect equilibria for network
  routing.
\newblock {\em Math. Oper. Res.}, 44(4):1286--1303.

\bibitem[Correa and Queyranne, 2012]{Correa+Queyranne/2012}
Correa, J.~R. and Queyranne, M. (2012).
\newblock Efficiency of equilibria in restricted uniform machine scheduling
  with total weighted completion time as social cost.
\newblock {\em Naval Research Logistics (NRL)}, 59(5):384--395.

\bibitem[Czumaj and V\"{o}cking, 2007]{Czumaj+Vocking/2007}
Czumaj, A. and V\"{o}cking, B. (2007).
\newblock Tight bounds for worst-case equilibria.
\newblock {\em ACM Transactions Algorithms}, 3(1):4:1--4:17.

\bibitem[de~Jong and Uetz, 2014]{SPOAcongestion}
de~Jong, J. and Uetz, M. (2014).
\newblock The sequential price of anarchy for atomic congestion games.
\newblock In {\em Proc. of the 10th International Conference on Web and
  Internet Economics (WINE)}, volume 8877 of {\em LNCS}, pages 429--434.

\bibitem[Epstein, 2010]{Epstein2010}
Epstein, L. (2010).
\newblock Equilibria for two parallel links: the strong price of anarchy versus
  the price of anarchy.
\newblock {\em Acta Informatica}, 47(7):375--389.

\bibitem[Even-Dar et~al., 2003]{Even-Dar+Kesselman+Mansour/2003}
Even-Dar, E., Kesselman, A., and Mansour, Y. (2003).
\newblock Convergence time to {Nash} equilibria.
\newblock In {\em Proceedings of the 30th International Colloquium on Automata,
  Languages, and Programming (ICALP)}, pages 502--513. Springer Berlin
  Heidelberg.

\bibitem[Feldmann et~al., 2003]{feldmann2003nashification}
Feldmann, R., Gairing, M., L{\"u}cking, T., Monien, B., and Rode, M. (2003).
\newblock Nashification and the coordination ratio for a selfish routing game.
\newblock In {\em ICALP 2003}, volume~30, pages 514--526. Springer.

\bibitem[Fiat et~al., 2007]{fiat2007strong}
Fiat, A., Kaplan, H., Levy, M., and Olonetsky, S. (2007).
\newblock Strong price of anarchy for machine load balancing.
\newblock In {\em ICALP 2007}, volume 4596, pages 583--594. Springer.

\bibitem[Finn and Horowitz, 1979]{Finn+Horowitz/1979}
Finn, G. and Horowitz, E. (1979).
\newblock A linear time approximation algorithm for multiprocessor scheduling.
\newblock {\em BIT Numerical Mathematics}, 19(3):312--320.

\bibitem[Fotakis et~al., 2002]{Fotakis+etal}
Fotakis, D., Kontogiannis, S., Koutsoupias, E., Mavronicolas, M., and Spirakis,
  P. (2002).
\newblock The structure and complexity of {Nash} equilibria for a selfish
  routing game.
\newblock In {\em Proceedings of the 29th International Conference on Automata,
  Languages and Programming (ICALP)}, pages 123--134. Springer Berlin
  Heidelberg.

\bibitem[Hassin and Yovel, 2015]{Hassin}
Hassin, R. and Yovel, U. (2015).
\newblock Sequential scheduling on identical machines.
\newblock {\em Oper. Res. Lett.}, 43(5):530--533.

\bibitem[Hoeksma and Uetz, 2012]{Hoeksma+Uetz/2012}
Hoeksma, R. and Uetz, M. (2012).
\newblock The price of anarchy for minsum related machine scheduling.
\newblock In {\em Proceedings of the 9th International Workshop on
  Approximation and Online Algorithms (WAOA)}, pages 261--273. Springer Berlin
  Heidelberg.

\bibitem[Immorlica et~al., 2009]{Immorlica+Li+Mirrokni+Schulz/2009}
Immorlica, N., Li, L.~E., Mirrokni, V.~S., and Schulz, A.~S. (2009).
\newblock Coordination mechanisms for selfish scheduling.
\newblock {\em Theoretical Computer Science}, 410(17):1589--1598.

\bibitem[Koutsoupias et~al., 2003]{koutsoupias2003approximate}
Koutsoupias, E., Mavronicolas, M., and Spirakis, P. (2003).
\newblock Approximate equilibria and ball fusion.
\newblock {\em Theory of Computing Systems}, 36(6):683--693.

\bibitem[Koutsoupias and Papadimitriou, 1999]{PoAoriginal}
Koutsoupias, E. and Papadimitriou, C.~H. (1999).
\newblock Worst-case equilibria.
\newblock In {\em Proc. of the 16th Annual Symposium on Theoretical Aspects of
  Computer Science (STACS)}, volume 1563 of {\em LNCS}, pages 404--413.

\bibitem[Leme et~al., 2012]{originalSPOA}
Leme, R.~P., Syrgkanis, V., and Tardos, {\'{E}}. (2012).
\newblock The curse of simultaneity.
\newblock In {\em Proc. of Innovations in Theoretical Computer Science (ITCS)},
  pages 60--67.

\bibitem[Pinedo, 2012]{Pinedo/2012}
Pinedo, M.~L. (2012).
\newblock {\em Scheduling: Theory, Algorithms, and Systems}.
\newblock Springer.

\bibitem[Tran\ae{}s, 1998]{tie-breaking}
Tran\ae{}s, T. (1998).
\newblock Tie-breaking in games of perfect information.
\newblock {\em Games and Economic Behavior}, 22(1):148 -- 161.

\bibitem[Zhang et~al., 2018]{Zhang+Zhang_Du+Bai/2018}
Zhang, L., Zhang, Y., Du, D., and Bai, Q. (2018).
\newblock Improved price of anarchy for machine scheduling games with
  coordination mechanisms.
\newblock {\em Optimization Letters}.

\end{thebibliography}

\newpage

\section{Appendix}

\begin{proof}[of Theorem~\ref{th:LB:five-jobs}]
	Consider the following instance with jobs arriving in this order (from left to right), 
	\begin{align} \label{eq:SPE:five-players}
	\begin{array}{c||c| c| c | c | c|}
	& J_1 & J_2 & J_3 & J_4 & J_5 \\ \hline\hline
	M_1 & \take{3-11\epsilon} & \epsilon & \take{\epsilon} & 1-2\epsilon & 2-8\epsilon \\ \hline
	M_2 & \epsilon & \take{2-9\epsilon} & 2-8\epsilon & \take{1-2\epsilon} & \take{1-2\epsilon} \\ \hline
	\end{array}
	\end{align}
	where the subgame perfect equilibrium is shown as  gray boxes.
	Note that the optimum has cost $1$, while the subgame perfect equilibrium costs $4-13\epsilon$. By letting $\epsilon$ tend to $0$ we get the desired result.  
	
	For ease of presentation, we assume that \emph{players break ties in favor of machine $M_1$} (this assumption can be dropped by using more involved coefficients for the $\epsilon$ terms -- see Remark~\ref{rem:tie-breaking}). To see why the allocation corresponding to the gray boxes is indeed a subgame perfect equilibrium,  we note the following:
	\begin{enumerate}
		\item  If the first three jobs follow the optimum, then $J_4$ prefers to deviate to $M_2$, which causes $J_5$ to switch to machine $M_1$:
		\begin{align*}
		\begin{array}{c||c|c|c|}
		& J_1 & J_2 & J_3  \\ \hline \hline
		M_1 & 3-11\epsilon & \take{\epsilon} & \take{\epsilon}  \\ \hline 
		M_2 & \take{\epsilon} & 2-9\epsilon & 2-8\epsilon \\ \hline
		\end{array} &&
		\Rightarrow &&
		\begin{array}{|c|c|}
		J_4 & J_5 \\ \hline\hline
		1-2\epsilon & \take{2-8\epsilon} \\ \hline
		\take{1-2\epsilon} & 1-2\epsilon \\ \hline
		\end{array}
		\end{align*} Now the cost for $J_3$ would be $2-6\epsilon$.
		\item  Given the previous situation, $J_3$ prefers to deviate to $M_2$ because in this way $J_4$ and $J_5$ choose $M_1$, and her cost will be $2-7\epsilon$:
		\begin{align*}\begin{array}{c||c|c|}
		& J_1 & J_2  \\ \hline \hline
		M_1 & 3-11\epsilon & \take{\epsilon}   \\ \hline 
		M_2 & \take{\epsilon} & 2-9\epsilon  \\ \hline
		\end{array} &&
		\Rightarrow &&
		\begin{array}{|c|c|c|}
		J_3 &	J_4 & J_5 \\ \hline\hline
		\epsilon & \take{1-2\epsilon }& \take{2-8\epsilon} \\ \hline
		\take{2-8\epsilon} & 1-2\epsilon & 1-2\epsilon \\ \hline
		\end{array}
		\end{align*}
		Now the cost for $J_2$ would be $3 - 9\epsilon$.
		\item  Given the previous situation, $J_2$ prefers to deviate to  $M_2$ because in this way $J_3$ and $J_4$ choose $M_1$,  $J_5$ chooses $M_2$, and the cost for $J_2$ is $3-10\epsilon$:
		\begin{align*}\begin{array}{c||c|}
		& J_1   \\ \hline \hline
		M_1 & 3-11\epsilon  \\ \hline 
		M_2 & \take{\epsilon}  \\ \hline
		\end{array} &&
		\Rightarrow &&
		\begin{array}{|c|c|c|c|}
		J_2 & J_3 &	J_4 & J_5 \\ \hline\hline
		\epsilon   & \take{\epsilon} & \take{1-2\epsilon }& 2-8\epsilon \\ \hline
		\take{2-9\epsilon}  & 2-8\epsilon & 1-2\epsilon & \take{1-2\epsilon} \\ \hline
		\end{array}
		\end{align*} 
		Now the cost for $J_1$ would be $3 - 10\epsilon$.
	\end{enumerate}
	We have thus shown that, if $J_1$ chooses $M_2$ then her cost will be $3 - 10\epsilon$.  To conclude the proof, observe that if $J_1$ chooses $M_1$, then by similar arguments as above job $J_2$ prefers to choose machine $M_2$ and all players will choose the allocation in \eqref{eq:SPE:five-players}, the cost for $J_1$, in this case, is also $3  - 10 \epsilon$. Since players break ties in favor of $M_1$, we conclude that the  subgame perfect equilibrium is the one in \eqref{eq:SPE:five-players}.
\qed\end{proof}
	

\begin{remark}[tie-breaking rule]\label{rem:tie-breaking}
	In the construction above, we have used the fact that players break ties in favor of $M_1$. This assumption can be removed by using slightly more involved coefficients for the $\epsilon$ terms, so that ties never occur.
\end{remark}

\begin{proof}[of Theorem~\ref{th:SPoS}]
Consider an optimal assignment and denote the corresponding makespan by $OPT$. By renaming jobs and machines, we can assume without loss of generality that in this optimal assignment machine $M_1$ gets the first $k\leq \frac{n}{2}$ jobs, and machine $M_2$ gets all the other jobs:
\begin{align*}
\{J_1,J_2,\ldots, J_k\} \rightarrow M_1 \enspace,  && \{J_{k+1}, \ldots, J_n\} \rightarrow M_2 \enspace .
\end{align*}
Take the sequence given by the jobs allocated to $M_1$ followed by the jobs allocated to $M_2$, 
$$J_1, J_2, \ldots, J_k, J_{k+1},\ldots, J_n.$$ We prove that for this sequence there is a subgame perfect equilibrium whose makespan is at most $(k+1)\cdot OPT$. 

In the proof, we consider the \emph{first player who deviates} from the optimal allocation. We distinguish two cases, corresponding to the next two claims. 

\begin{claim}
	If the first player $J_d$ who deviates is in $\{J_{k+1}, \ldots, J_n\}$, then she does not improve her own cost. 
\end{claim}
\begin{proof}[of Claim]
	Observe that all players in $\{J_1,J_2,\ldots, J_k\}$, who came before player $J_d$, did not deviate. Machine $M_1$ has thus exactly the jobs that it gets in the optimum. If $J_d$ stays on $M_2$, her cost will be at most $OPT$ (in the worst-case, all subsequent jobs choose to stay on $M_2$). Moving to $M_1$ will in the end produce a schedule with fewer jobs on $M_2$ and more jobs on $M_1$, compared to the optimum. The cost on $M_1$ is therefore at least $OPT$ (otherwise the new schedule has smaller makespan than $OPT$, a contradiction with the optimality), which is the cost of $J_d$ when deviating. 
\end{proof}

The remaining  case is the following one.

\begin{claim}
	If the first player $J_d$ who deviates is in $\{J_{1}, \ldots, J_k\}$, then any subgame perfect equilibrium has  makespan at most $(t+1)\cdot OPT$ where $t=k+1-d$.
\end{claim} 
\begin{proof}[of Claim]
	The proof is by induction on $t$. For $t=1$, the deviating player is the last in $\{J_{1}, \ldots, J_k\}$, i.e., $J_k$. 
	Note that, if $J_k$ does not deviate, then by the previous claim, $J_k$ guarantees her cost to be at most $OPT$.
	Thus, if $J_k$ deviates to $M_2$, then in the resulting equilibrium schedule she cannot pay more, i.e., she pays at most $OPT$.
	We now argue that if $J_k$ deviates, $M_1$ will have, in the resulting equilibrium, load at most $2\cdot OPT$.
	Clearly, in the resulting equilibrium, the load on $M_2$ is at most $OPT$. Some of the jobs among $J_{k+1},\ldots,J_n$ may be assigned to $M_1$ in this equilibrium. Moving them all to machine $M_2$ will result in a load on $M_2$ being at most $2\cdot OPT$ (since all these jobs are, in the optimum solution, on $M_2$). 
	Hence, each job among $J_{k+1},\ldots,J_n$ that decided to move to $M_1$ in the resulting equilibrium cannot have a worse cost than that of staying on $M_2$, which guarantees cost at most $2\cdot OPT$.

	For $t>0$, the first player who deviates from the optimal assignment is $J_{k-t}$.  We argue similarly that the makespan is at most $(t+1) \cdot OPT$. By induction we can assume that if player $J_{k-t}$ stays on the first machine then she is guaranteed to pay at most $t\cdot OPT$. Since in the subgame perfect equilibrium $J_{k-t}$ chooses the second machine, we know that she is paying at most $t\cdot OPT$ on the second machine. Thus, the cost on the second machine is at most $t\cdot OPT$. We next argue that the cost on the first machine is at most $(t+1)\cdot OPT$. If no player in $\{J_{k+1}, \ldots, J_{n}\}$ chooses the first machine, then the cost of this machine is at most $OPT$. Otherwise, if some player $J''$ in $\{J_{k+1}, \ldots, J_{n}\}$ chooses the first machine, then we show that she is paying at most $(t+1) \cdot OPT$, thus implying that the cost of the first machine is at most $(t+1) \cdot OPT$.  This is because, if  $J''$ would choose the second machine, she would pay at most $(t+1) \cdot OPT$: indeed, the cost on the second machine was at most $t \cdot OPT$ and even if all players after $J''$ choose $M_2$, they will contribute at most another additional factor $OPT$ (because the players after $J''$ are in $\{J_{k+1}, \ldots, J_{n}\}$). 
	
	We have thus shown that the cost on the first machine is at most $(t+1)\cdot OPT$, and therefore this sequence results in a solution which has  makespan at most $(k+1)\cdot OPT$, which completes the proof.	
\qed\end{proof}

The two claims above imply the theorem.\qed\end{proof}

\begin{proof}[of Theorem~\ref{optimum_sequence}][Main Idea]
	The main idea of the proof is as follows. 
	Each internal node of a tree corresponds to a choice of some player, and the path (edges) to that node correspond to an allocation of a \emph{subset of players} (the nodes on the node-to-root path). We consider the corresponding \emph{constrained optimum}, that is, allocation of all remaining jobs that minimizes the makespan, given the fixed allocation of the previous players. Among these remaining players, we then find a particular one for which the constrained optimum is better than any constrained optimum if she deviates. If this player deviates, we can \emph{punish} such deviation by letting the others implement the more expensive constrained optimum (by adaptively fixing their order). 
	
\end{proof}

\begin{proof}
	\newcommand{\copt}{opt}
	For any node $h$ of the tree, let $S_h$ be the subset of players that appeared on the previous nodes (i.e., from the parent of $h$ up to the root), and let $A_h$ be the  resulting allocation (described by the path). Let $\copt_h$ be the constrained optimum, that is, the allocation of all remaining jobs $R_h$ which, combined with $A_h$, minimizes the resulting makespan. We now choose a suitable player $J^*(h)$ to put on node $h$, according to the following:
	
	\begin{claim}
		There exists a player $J^*(h)$ among the remaining players $R_h$ such that the following holds. If $J^*(h)$ deviates from the constrained optimum $\copt_h$, then the new constrained optimum (if implemented) is not less costly for $J^*(h)$.  
	\end{claim}
		\begin{proof}[of Claim]
	Consider the best alternative solution $\copt'_h$, that is, the best allocation for the remaining jobs $R_h$, given the allocation $A_h$ of jobs in $S_h$, such that at least one job in $R_h$ is not allocated as in $\copt_h$. 
	Observe the following: 
	\begin{itemize}
		\item There must be a job $J'\in R_h$ which $\copt'_h$ allocates differently from $\copt_h$ and in $\copt'_h$ this job is on a machine determining the makespan of $\copt'_h$. 

	\end{itemize}
	This means that if $J'$ deviates from $\copt_h$, then she is choosing the machine according to $\copt'_h$. Consider the resulting constraints $A'_h$ and $R'_h$, where we extend $A_h$ by allocating $J'$ as in $\copt'_h$ (therefore the remaining players are $R'_h=R_h \setminus \{J'\}$). Since $\copt'_h$ is the best solution among all solutions in which at least one job from $R_h$ is not allocated as in $\copt_h$, then $\copt'_h$ is a constrained optimum for such $A'_h$ and $R'_h$.  The cost of $J'$ in solution $\copt'_h$ is the corresponding makespan, which cannot be smaller than the makespan of $\copt_h$.  We can thus choose $J^*(h)=J'$ and observe that  deviating from $\copt_h$ makes $J^*(h)$ incur at least the same cost as in $\copt_h$. \qed
\end{proof}
   At each node $h$, the chosen player $J^*(h)$ can either follow the constrained optimum, or deviate. backward induction guarantees that in either case, the remaining players implement the resulting constrained optimum. The above claim implies that $J^*(h)$ does not deviate from the constrained optimum $\copt_h$ (under a suitable tie-breaking rule).\qed
\end{proof}

\end{document}